\title{Double Threshold Digraphs}
\author{Peter Hamburger}{Department of Mathematics, Indiana-Purdue University\\{[Fort Wayne, IN 46805, USA]}}{hamburge@ipfw.edu}{}{}
\author{Ross M. McConnell}{Department of Computer Science, Colorado State University\\{[Fort Collins, CO 80523, USA]}}{rmm@cs.colostate.edu}{}{}
\author{Attila P\'or}{Department of Mathematics, Western Kentucky University\\{[Bowling Green, KY 42101]}}{attila.por@wku.edu}{}{}
\author{Jeremy P. Spinrad}{Department of Computer Science, Vanderbilt University\\{[Nashville, TN 37235, USA]}}{Jeremy.P.Spinrad@vanderbilt.edu}{}{}
\author{Zhisheng Xu}{Department of Computer Science, Colorado State University\\{[Fort Collins, CO 80523, USA]}}{xuzs9298@cs.colostate.edu}{}{}
\subjclass{Theory of computation $\rightarrow$ Mathematics of computing $\rightarrow$ Discrete Mathematics $\rightarrow$ Graph theory}
\keywords{posets, preference relations, approximation algorithms}
\authorrunning{P. Hamburger, R.~M. McConnell, A. P\'or, J.~P. Spinrad, Z. Xu}
\newcommand{\RR}{{\mathbb{R}}}
\begin{document}

\maketitle

\begin{abstract}\noindent
A {\em semiorder} is a model of preference relations
where each element $x$ is associated with
a utility value $\alpha(x)$, and there is a threshold $t$ such that 
$y$ is preferred to $x$ iff $\alpha(y) - \alpha(x) > t$.  
These are motivated by the
notion that there is some uncertainty in the utility values we
assign an object or that a subject may be unable to distinguish
a preference between objects whose values are close.
However, they fail to model the well-known phenomenon that preferences
are not always transitive.
Also, if we are uncertain of the utility values, it is not
logical that preference is determined absolutely by a comparison of them
with an exact threshold.  We propose a new model in which there are 
two thresholds, $t_1$ and $t_2$; if the difference 
$\alpha(y) - \alpha(x)$ is
less than $t_1$, then $y$ is not preferred to $x$; if the difference 
is greater than $t_2$ then $y$ is preferred to $x$; if it is between 
$t_1$ and $t_2$, then $y$ may or may not be preferred to $x$. 
We call such a relation 
a $(t_1,t_2)$ double-threshold semiorder, and the corresponding directed graph $G = (V,E)$
a $(t_1,t_2)$ double-threshold digraph.  Every directed acyclic graph is a 
double-threshold digraph; increasing bounds on $t_2/t_1$
give a nested hierarchy of subclasses of the directed acyclic graphs.
In this paper we characterize the subclasses in terms of forbidden
subgraphs, and give algorithms for finding an assignment 
of utility values that explains the relation in terms of a given $(t_1,t_2)$
or else produces a forbidden subgraph, and finding the minimum value 
$\lambda$  of $t_2/t_1$ that is satisfiable for a given directed 
acyclic graph.  We show that
$\lambda$ gives a useful measure of the complexity of a directed acyclic graph 
with respect to several optimization problems that are NP-hard on arbitrary 
directed acyclic graphs.
\end{abstract}

\section{Introduction}
A poset $P$ can be identified with a transitive digraph on its elements.
The poset $P = P(V,<)$ is a {\em semiorder}~\cite{Luce} if for some 
utility function $\alpha:V \rightarrow \RR$ we have $u <_P v$ if and only
if $\alpha(v) - \alpha(u) > 1$. Semiorders were introduced as a possible
mathematical model for preference in the social sciences.  A first possible 
model for preference is the {\em weak orders}, in which each element is 
assigned a utility value, such that $u$ is preferred to $v$ iff the value of 
$u$ is greater than the value of $v$.  This was viewed as too restrictive; 
many preference relationships cannot be modeled by a weak order. Semi-orders 
were designed to model imprecision in the valuation function; we may be 
indifferent between elements not only if they have exactly the same values, 
but also if the difference between the values is smaller than some threshold. 
There is a great deal of literature on the subject of semiorders and 
preference; see the 
books~\cite{Fis-85,Pirlot}.

Our original motivation for defining double-threshold digraphs comes from
an attempt to deal with an issue in mathematical psychology. Intuitively,
it is natural to think that preference is transitive; if one prefers $a$ to $b$
and $b$ to $c$, then one ``should'' prefer $a$ to $c$. However, a variety
of evidence exists showing that preferences are not always transitive.
This has led to a great deal of discussion; for a summary of this issue, see
\cite{Fis-91}. Viewpoints range
from the idea that the intuitive notion that preference is transitive
are simply wrong and must be thrown away entirely to questioning whether
what was being measured in the non-transitive findings was really a preference
relation. Between these two views, there has been work on finding mathematical
models that explain non-transitive preference; Fishburn \cite{Fis-91} gives
some possible models. 

One approach to mathematical modeling is to try to
give a reasonable model of extremely non-transitive preference; the famous
cyclic voter's paradoxes can be viewed as a model of preference which can
allow not just non-transitivity, but also cycles. 

Unlike these approaches, we generalize semi-orders to allow 
non-transitivity, but we require that the given set of preferences continue 
to be acyclic.  In other words, we consider any preference relation 
represented by a directed acyclic graph (a {\em dag}).
As in the case of semiorders, we assume that reported preferences are
influenced by an underlying hidden utility function, which may be approximate,
imperfectly known by a subject, or otherwise fail to capture all factors 
influencing a report of a preference.

One of our objectives is to 
obtain a measure of the departure of a given arbitrary acyclic set of pairwise
preferences from a model where preferences are driven exclusively by 
an underlying hidden utility function, as well as derive
an assignment of utility values that has the
most explanatory power, in a sense that we define within a new model
that we propose.

We propose a generalization of a semiorder, a {\em double-threshold semiorder}.
We loosen the definition of a semiorder to a broader class of relations
that are acyclic but not necessarily transitive, by 
allowing two thresholds $t_1$ and $t_2$ such that $t_1 \leq t_2$,
and finding a valuation
$\alpha(x)$ for each element $x$.  
For two elements $x$ and $y$,
$(x,y)$ is not reported as a preference if
$\alpha(y) - \alpha(x) < t_1$,
$(x,y)$ can freely be reported as a preference or not if
$t_1 \leq \alpha(y) - \alpha(x) \leq t_2$, and $(x,y)$ is reported
as a preference if $\alpha(y) - \alpha(x) > t_2$.
Let a {\em satisfying utility function} or a {\em satisfying assignment
of $\alpha$ values} for $(t_1, t_2)$ be a utility 
function $\alpha$ that meets these constraints.
This accommodates within the model the well-known phenomenon in the
literature on perception
that there can be a range of differences between the minimum
difference that is sometimes perceived and the minimum difference that is 
perceived reliably.

When the relation of the double-threshold semiorder is modeled by a dag, 
it is called a {\em double-threshold digraph}.
If a dag can be represented with thresholds $(t_1, t_2)$, then it can be
represented with any pair $(t'_1, t'_2)$ of thresholds such that
$t'_2/t'_1 = t_2/t_1$, since a solution $\alpha$ for $(t_1, t_2)$ can be turned
into a solution for $(t'_1, t'_2)$  by rescaling all $\alpha$ values 
by the factor $t'_1/t_1 = t'_2/t_2$.  Therefore,
for any pair $(t_1,t_2)$ of thresholds, the question of whether a particular
dag can be represented with them depends
on the ratio $r = t_2/t_1$; larger ratios allow representations of more
dags.

Henceforth, given a digraph $G$, let $n(G)$ denote the number of vertices 
and $m(G)$ the number of edges.  When $G$ is understood, we may denote 
these as $n$ and $m$.
For a dag $G$, let $\lambda(G)$ denote the minimum ratio of $t_2/t_1$ 
such that $G$ has a satisfying utility function for $(t_1,t_2)$.
When $G$ models a weak order, $t_1 = 1$ and $t_2 = \epsilon$ for
any $\epsilon > 0$ has a satisfying utility function.  For this
trivial special case, which is easily recognized in linear time,
we define $\lambda(G)$ to be 0, the lower bound on the satisfiable ratios
$t_2/t_1$, and call such a dag a {\em degenerate} dag.
All other dags are {\em nondegenerate}.

When $G$ or the preference relation it models is understood, we denote
$\lambda(G)$ simply by $\lambda$.  For a dag that models a nondegenerate 
semiorder, $\lambda = 1$; higher values of $\lambda$
provide a measure of the degree to which a given set of 
preferences depart from a semiorder.
An acyclic preference relation is a $(t_1, t_2)$-{\em semiorder} if
it has a satisfying utility function for $(t_1, t_2)$, that is, 
if $t_2/t_1 \geq \lambda$.  When such a preference relation is modeled as a 
digraph, we say the digraph is a $(t_1,t_2)$ {\em double-threshold digraph}.
We show that for any nondegenerate dag $G$, $\lambda(G)$ can be expressed as a ratio $j/i$
where $i$ and $j$ are integers such that $1 \leq i \leq j < i+j \leq n$
(Theorem~\ref{thm:ij}),
allowing $t_1$, $t_2$, and the utility function to have small integer
values.
Also, for any dag, $t_1 = 1$ and $t_2 = n-1$ is always satisfiable,
so $\lambda \leq n-1$.  An example of when the bound is tight
is when $G$ is a directed path.

Thus, the classes of dags with $\lambda$ bounded by different values 
give a nested hierarchy of dags, starting with weak orders and semiorders.
For each class  in the hierarchy, we
give a characterization of the class in terms of a set 
of forbidden subgraphs for the class.

When $G$ has no satisfying utility function for $t_1$, $t_2$, we
show how to return a forbidden subgraph as a 
certificate of this in $O(nm/r)$ time, where $r = t_2/t_1$,
and an $O(nm/\lambda)$ time bound for finding $\lambda$ (Theorem 18).
The algorithm combines elements of the Bellman-Ford single-source shortest
paths algorithm~\cite{Cormen}, Karp's minimum mean cycle algorithm~\cite{karp78},
and dynamic programming techniques based on a topological sort of a dag.
For $t_2/t_1 = \lambda$, a satisfying assignment, together with a forbidden
subgraph for a smaller ratio, give a 
certificate that $\lambda = t_2/t_1$, and these take $O(nm/\lambda)$ time 
to produce.

If $\lambda$ is less than $2$, $G$ must be transitive.
The converse is not true:
it is easy to show that the class of posets does not have bounded
$\lambda$.
Consider a chain $(v_1, v_2, \ldots, v_{n-1})$ in a poset and a 
vertex $v_n$ that is incomparable to the others;
$t_2 \geq t_1(n-2)/2$.  
Even though they are transitive, some posets are not good models of
a preference relation that is based on an underlying utility function.

Although we show that bounding $\lambda$ can make
some NP-complete problems tractable, bounded-ratio double-threshold digraphs
are in one sense enormously larger than semiorders.  Semiorders
correspond to digraphs that can be represented with ratio 1.
These classes of digraphs
both have implicit representations \cite{spinrad}, implying that there are
$2^{O(n \log n)}$ such digraphs on a set of $n$ labeled vertices. 
By contrast, every height 1
digraph can be represented with ratio $1$:
for each vertex $x$, assign $\alpha(x) = 0$ if is it a source
or $\alpha(x) = 1$ if it is a sink
and make the thresholds $t_1 = t_2 = 1$.
The number of such digraphs on $n$ labeled vertices, hence the number with 
ratio $\lambda$ for any $\lambda$ greater than 
or equal to 1, is $2^{\Theta(n^2)}$.  

The {\em underlying undirected graph} of a dag is the symmetric
closure, that is, the undirected
graph obtained by ignoring the orientations of the edges.
In this paper, we say that a dag is {\em connected} if its underlying
undirected graph is connected.  Similarly,
by a {\em clique}, {\em coloring}, {\em independent set}, or
{\em clique cover} of a dag, we mean a clique, coloring, independent
set or clique cover of the underlying undirected graph.
Hardness results about these problems on undirected graphs
also apply to dags, since every undirected graph $G$ is the underlying 
undirected graph of the dag obtained by assigning an acyclic orientation
to $G$'s edges.

Finding a maximum independent set or clique in a dag
takes polynomial time if the dag is transitive (a poset), hence if it is
a semiorder, but for arbitrary dags, there is no polynomial-time 
approximation algorithm for finding 
a independent set or clique whose size is within a factor of 
$n^{1-\epsilon}$ of the largest
unless P = NP~\cite{hastad99}.
However, for a connected dag $G$, we give an
$O(\lambda m^{\lfloor \lambda + 1 \rfloor/2})$ algorithm for finding
a maximum clique (Corollary~\ref{cor:exact}), and an approximation algorithm that finds a clique whose
size is within a desired factor of $i$ of that of a maximum clique
in $O(nm/\lambda + m^{\lfloor \lambda/i + 1 \rfloor/2})$ time (Corollary~\ref{cor:cliqueApprox}).

We show that finding a maximum independent set is still NP-hard when 
$\lambda \geq 2$, but we give a polynomial-time approximation algorithm 
that produces an independent set whose size is within a factor of 
$\lfloor \lambda +1 \rfloor$ of the optimum (Theorem~\ref{thm:IS}).
We give approximation bounds of $\lfloor \lambda+1 \rfloor$ for minimum 
coloring and minimum clique cover (Theorems~\ref{thm:coloring} and~\ref{thm:cCover}), which also have no polynomial algorithms 
for finding an $n^{1 - \epsilon}$ approximation for arbitrary dags unless 
P = NP.  

Thus, restricting attention to dags such that $\lambda$
is bounded by a constant makes some otherwise NP-hard problems easy
and gives rise to polynomial-time approximation algorithms that cannot exist
in general unless P = NP.  In each case, the time bound or the approximation
bound is an increasing function of $\lambda$.
This supports the view of $\lambda$ as a measure of complexity of a dag.
By contrast, for most similar attempts to measure complexity of a
graph or digraph, the measurement is NP-hard to compute;
examples include dimension of a poset, interval number, boxicity, and
many others; see \cite{spinrad}.

A concept similar to $\lambda$ was given previously by Gimbel and Trenk 
in~\cite{GimbelTrenk98}.  They developed a generalization of
weak orders to partial orders that corresponds to the special case
of a $(1,k)$ transitive dag.  Not assuming transitivity requires us
to use different algorithmic methods, but our bounds improve their 
bounds for their special case from $O(n^4k)$ and $O(n^6)$ to $O(mn/k)$.
Most of their structural results are disjoint from ours because they
are relevant to partial orders and their underlying undirected graphs,
the {\em comparability graphs.}

\section{Satisfying utility functions and forbidden subgraphs}\label{sect:polyUtility}

We give the following formal definition:

\begin{definition}
	A dag is a $(t_1,t_2)$ double-threshold digraph if there exists an
	assignment of a real value $\alpha(v)$ to each vertex $v$ such that
	whenever $(u,v)$ is an edge, $\alpha(v) - \alpha(u) \geq t_1$
	and whenever $(u,v)$ is not an edge, $\alpha(v) - \alpha(u) \leq t_2$.
\end{definition}

Whether the constraints can be satisfied can be formulated as the problem
of finding a feasible solution to a linear program:

\begin{itemize}
	\item  $\alpha(v) - \alpha(u) \geq t_1$ for each $(u,v)$ such that $(u,v)$ is an edge;

	\item  $\alpha(v) - \alpha(u) \leq t_2$ for each $(u,v)$ such that neither $(u,v)$ nor $(v,u)$ is an edge;

\item $\alpha(v) \leq 0$ for all $v \in V(G)$.
\end{itemize}

The last constraint is added as a convenience;  for any satisfying
assignment, an arbitrary constant can be subtracted from all of
the $\alpha$ values to obtain a new satisfying assignment, so the constraint
cannot affect the existence of a feasible solution.

This is a special case of a linear program, 
a {\em system of difference constraints}, where
each constraint is an upper bound on the difference of two
variables.  
This reduces to the problem of finding the weight of a least-weight
path ending at each vertex in a digraph derived from the constraints, as 
described in~\cite{Cormen}, where there is a satisfying assignment
if and only if the digraph of the reduction has no negative-weight cycle.   
Applying the reduction to the problem of determining whether there is
a satisfying utility function on $G$
yields a digraph $G_d$, where $V(G_d) = V(G)$
(see Figure~\ref{fig:Gd}).  
$G_d$ has an edge $(y,x)$ of weight $-t_1$ for each edge $(x,y)$
of $G$, and edges $(u,v)$ and $(v,u)$ of weight $t_2$ for each pair
$\{u,v\}$ such that neither of $(u,v)$ and $(v,u)$ is an edge of $G$.
A negative cycle in $G_d$ proves that the system is not satisfiable; 
otherwise, for each $x \in V$, assigning $\alpha(x)$ to be the minimum 
weight of any path ending at $x$ gives a satisfying assignment for 
$(t_1, t_2)$.  

The single-source least-weight paths problem where some weights are negative
can be solved in $O(nm)$ time, but $G_d$ has $\Theta(n^2)$ edges,
so a direct application of this approach takes $\Theta(n^3)$ time to find 
a satisfying assignment or produce a negative-weight cycle in $G_d$.  We derive
tighter bounds below.

\begin{figure}
\centerline{\includegraphics[]{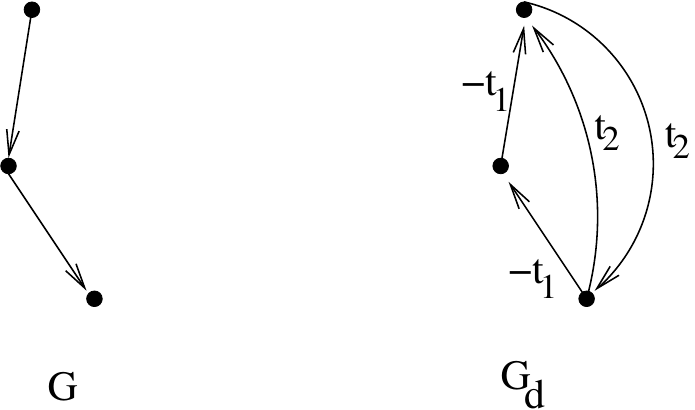}}

\caption{Reduction of finding a satisfying utility function to
the single-source least-weight paths problem.  Edges of 
weight $t_1$ in $G_d$ are acyclic.}\label{fig:Gd}
\end{figure}

In terms of $G$, a negative cycle of $G_d$ translates to a forbidden
subgraph characterization of $(t_1, t_2)$ double-threshold digraphs:

\begin{definition}
	Let $(u,v)$ be a {\em hop} in $G$ if neither $(u,v)$ nor $(v,u)$
	is an edge of $G$.
	Let a {\em forcing cycle} be a simple cycle $(v_1, v_2, ..., v_k)$
	such that such that for each consecutive pair $(v_i, v_{i+1})$ (indices mod $k$), 
        the pair is either a directed edge of $G$ or a hop.
	Let the {\em ratio} of the forcing cycle be the ratio of the number of edges to
	the number of hops.
\end{definition}

\begin{theorem}
	For a nondegenerate dag $G$, the minimum satisfiable ratio $\lambda$ 
	is equal to the maximum ratio of a forcing cycle in $G$.
\end{theorem}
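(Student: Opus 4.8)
The plan is to read off the characterization directly from the reduction to $G_d$ that has already been established, namely that $(t_1,t_2)$ admits a satisfying utility function if and only if $G_d$ has no negative-weight cycle. Every arc of $G_d$ has weight either $-t_1$ (an \emph{edge-arc}, coming from an edge of $G$) or $t_2$ (a \emph{hop-arc}, coming from a hop of $G$), so a cycle of $G_d$ using $a$ edge-arcs and $b$ hop-arcs has weight $bt_2 - at_1$. Since $t_1,t_2>0$, this weight is negative exactly when $t_2/t_1 < a/b$, and every cycle of $G_d$ has $b\ge 1$: a cycle using only edge-arcs would, after reversing the orientation of each of its arcs, be a directed cycle of $G$, contradicting acyclicity. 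Hence $(t_1,t_2)$ is satisfiable iff $t_2/t_1 \ge a/b$ for every cycle of $G_d$, i.e. iff $t_2/t_1$ is at least the supremum of $a/b$ over all cycles of $G_d$; so $\lambda$ equals that supremum.

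The second step is to replace arbitrary cycles by simple cycles. Any closed walk in $G_d$ splits, by repeatedly peeling off a cycle at a repeated vertex, into simple cycles whose edge-arc and hop-arc counts sum to those of the walk; consequently, if a closed walk has negative weight (equivalently $a/b > t_2/t_1$) then at least one of its simple-cycle pieces does too. Therefore the supremum above is attained and is attained by a simple cycle, giving $\lambda = \max\{\,a(C)/b(C) : C\text{ a simple cycle of }G_d\,\}$. The final step is a ratio-preserving bijection between simple cycles of $G_d$ and forcing cycles of $G$: given a simple cycle $(w_1,\dots,w_\ell)$ of $G_d$, reverse it to $(w_1,w_\ell,w_{\ell-1},\dots,w_2)$. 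An edge-arc $(w_i,w_{i+1})$ of $G_d$ arose from the edge $(w_{i+1},w_i)$ of $G$, which is exactly the reversed step; a hop-arc arose from a hop, and a hop is symmetric, so the reversed step is again a hop. Thus the reversed sequence is a forcing cycle with the same values of $a$ and $b$; reversal is an involution on cyclic sequences and preserves simplicity, and a simple cycle of $G_d$ can use at most one of the two arcs of a given hop-pair, so this is a bijection onto the forcing cycles. Chaining the three steps yields $\lambda = \max_C a(C)/b(C)$ over forcing cycles $C$ of $G$, which is the maximum ratio of a forcing cycle.

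The part requiring the most care is not any single calculation but making the chain of equalities clean: getting the orientation convention of $G_d$ right in the bijection, and justifying that passing from ``no negative closed walk'' to ``no negative simple cycle'' does not change the critical ratio. A minor loose end, which I would dispatch at the outset using nondegeneracy, is that the maxima are over a nonempty set and that $\lambda>0$: if $G$ is nondegenerate then $(t_1,t_2)$ fails for arbitrarily small ratios $t_2/t_1$, which by the reduction forces a negative cycle in $G_d$ with $a\ge 1$, hence a forcing cycle containing at least one edge; combined with Theorem~\ref{thm:ij} this also pins $\lambda$ to a genuine rational ratio $j/i$ realized by such a forcing cycle.
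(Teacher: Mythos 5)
Your proof is correct and follows exactly the route the paper intends: the paper states this theorem without an explicit proof, as an immediate consequence of the just-described reduction to negative-weight cycles in $G_d$, and your argument (cycle weight $bt_2-at_1$, acyclicity of the $-t_1$ arcs forcing $b\ge 1$, reduction to simple cycles, and orientation-reversal correspondence with forcing cycles) is precisely that implicit argument spelled out. The only nitpick is the claim that a simple cycle of $G_d$ uses at most one arc of a hop-pair, which fails for $2$-cycles on a hop pair, but such cycles have ratio $0$ and positive weight, so they do not affect the maximum.
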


One consequence of the theorem is that when $G$ is a nondegenerate dag, a satisfying
assignment of $\alpha$ values for thresholds $(t_1,t_2)$, together
with a forcing cycle with ratio equal to $t_2/t_1$ gives
a certificate that $\lambda(G) = t_2/t_1$, as illustrated
in Figure~\ref{fig:forbiddenCycle}.

\begin{figure}
\centerline{\includegraphics[]{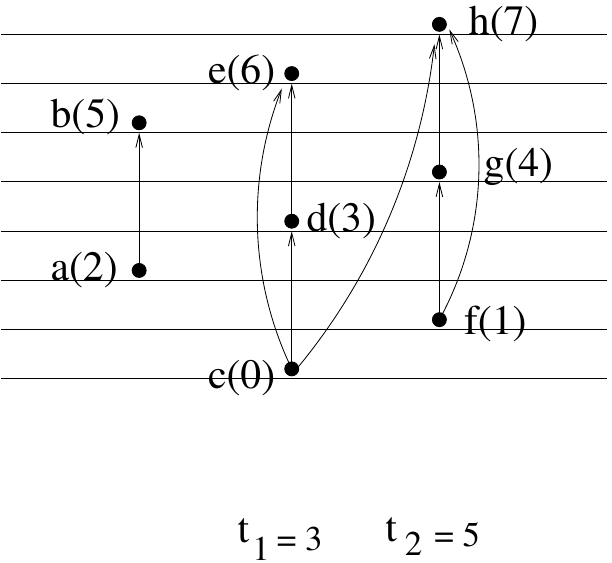}}
\caption{A dag such that $\lambda = 5/3$.  The
number next to each vertex is the value of the utility function, conforming 
to $t_1 = 3$ and $t_2 = 5$.  The cycle $(a,b,c,d,e,f,g,h)$ is a cycle
of directed edges and hops in which the ratio of edges to hops
is 5/3.  Since $\lambda < 2$, the dag is transitive.
}\label{fig:forbiddenCycle}
\end{figure}

\begin{theorem}\label{thm:ij}
	For every nondegenerate dag $G$, $\lambda(G)$ can be expressed as a ratio $i/j$
	of integers such that $1 \leq i \leq j < i+j \leq n$.
\end{theorem}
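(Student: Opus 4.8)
The plan is to read the statement off the forcing-cycle characterization proved immediately above, which says that $\lambda(G)$ equals the maximum, over all forcing cycles of $G$, of the ratio of the number of traversed edges to the number of traversed hops. A finite digraph has only finitely many simple cycles, so this maximum is attained: fix a forcing cycle $C$ realizing it, and let $j$ and $i$ be, respectively, the number of directed edges of $G$ and the number of hops that $C$ traverses, so that $\lambda(G) = j/i$. It then remains only to check the inequalities $1 \le i \le j$ and $i + j \le n$.

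I would verify these in the following order. First, $i + j \le n$: a forcing cycle is by definition a simple cycle, so its $i + j$ vertices are distinct and hence number at most $n$. Second, $i \ge 1$: were $i = 0$, the cycle $C$ would consist solely of directed edges of $G$ and thus be a directed cycle, contradicting that $G$ is a dag; so $C$ traverses at least one hop. Third, $\lambda(G) \ge 1$: every double-threshold representation uses thresholds with $t_1 \le t_2$, so each satisfiable ratio $t_2/t_1$, and hence its minimum $\lambda(G)$, is at least $1$; since $\lambda(G) = j/i \ge 1$ we conclude $j \ge i$, and in particular $j \ge 1$. Finally, $i \ge 1$ gives $j < i + j$, which assembles into the full chain $1 \le i \le j < i + j \le n$.

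The step I expect to be the only real obstacle is $\lambda(G) \ge 1$, equivalently $j \ge i$: it cannot be extracted from the extremal forcing cycle alone, since a dag whose only forcing cycle has one edge and two hops has extremal ratio $1/2 < 1$, and there $\lambda(G) = 1$ is forced by the convention $t_1 \le t_2$ of the model rather than realized by a cycle — in which case one records instead the trivial representation $\lambda(G) = 1/1$ with $i = j = 1 \le n$. Everything else is bookkeeping: simplicity of $C$ supplies the upper bound $n$, acyclicity of $G$ supplies a hop, and the bound $\lambda(G) \ge 1$ then supplies an edge.
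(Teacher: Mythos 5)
Your proof is correct and takes essentially the same route as the paper's, which is a one-line argument citing the forcing-cycle characterization together with $\lambda \geq 1$: the extremal forcing cycle supplies $\lambda = j/i$ with $j$ edges and $i \geq 1$ hops (acyclicity), simplicity gives $i + j \leq n$, and $\lambda \geq 1$ gives $i \leq j$ (the paper's ``$i/j$'' in the statement should be read as this edge-to-hop ratio $j/i$, as its proof makes clear). Your extra caution about an extremal cycle of ratio below $1$, handled via the convention $t_1 \leq t_2$ and the trivial representation $1/1$, is a detail the paper's proof glosses over but does not change the approach.
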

\begin{proof}
This follows from the fact that $\lambda \geq 1$ and is the ratio of the 
number $j$ of edges to the number $i$ of hops on a forcing cycle.
\end{proof}

Aside from showing that optimum values of $t_1$ and $t_2$ can be
expressed as small integers, the theorem
gives an immediate $O(n^3 \log n)$ bound
for finding $\lambda$.  This is because it implies that the number of possible
values $j/i$ that $\lambda$ can take on is $O(n^2)$, and that these can be
generated and sorted in $O(n^2 \log n)$ time.  A binary search on this
list, spending $O(n^3)$ time at each probe $j/i$ to determine whether
$G$ is an $(i,j)$ double-threshold digraph, as described above, can then be used
to find $\lambda$.
Once $\lambda$ is known, a satisfying assignment of utility values
for $t_2/t_1 = \lambda$, together with a forcing cycle with forcing
ratio equal to $\lambda$ gives a certificate that the claimed value of
$\lambda$ is correct.
We improve these bounds to $O(nm/\lambda)$ in section~\ref{sect:faster}.

\section{$k$-clique extendable orderings}

In the book~\cite{spinrad}, Spinrad introduced the class of 
{\em $k$-clique extendable orderings}
of the vertices of graphs, which we explain below.  Finding whether a graph
has a 2-clique extendable ordering takes polynomial time, but no polynomial time
bounds are known for $k \geq 3$.
However, we show in the next section that a topological
sort of a nondegenerate dag $G$ is a $k$-clique extendable ordering for 
$k = \lfloor \lambda(G) \rfloor + 1$, and
develop several applications of this result to optimization problems.
In this section, we give the details and analysis of the time
bound of an algorithm suggested in~\cite{spinrad} 
for finding a maximum clique, given a $k$-clique extendable ordering.

Two sets {\em overlap} if they intersect and neither is a subset of the other.
Let $\sigma = (v_1, v_2,$ $\ldots,$ $v_n)$ be an ordering of the vertices of a 
graph, $G = (V,E)$.  For $U \subseteq W \subseteq V$ let us say
that $W$ {\em ends with $U$} if the elements of $U$ are the last elements
of $W$ in $\sigma$, that is, if no element of $W \setminus U$ occurs
after an element of $U$.  $W$ {\em begins with $U$} 
if $W$ ends with $U$ in $(v_n, v_{n-1}, \ldots, v_1)$.

\begin{definition}
An ordering $\sigma = (v_1, v_2, \ldots, v_n)$ of vertices of a graph 
$G = (V,E)$ is $k$-clique extendable ordering of $G$ if, whenever $X$ and $Y$
are two overlapping cliques of size $k$, $|X \cap Y| = k-1$, and  $X \cup Y$ 
begins with $X \setminus Y = \{a\}$ and ends with $Y \setminus X = \{b\}$ 
in $\sigma$, then $a$ and $b$ are adjacent and $X \cup Y$ is a clique.  
\end{definition}

This is a generalization of transitivity, since a
dag is transitive if and only if its topological sorts are two-clique
extendable orderings, hence a graph is a comparability graph if and only if
it has a two-clique extendable orderings.
In~\cite{spinrad}, it is shown that three-clique extendable orderings arise
naturally in connection with visibility graphs, and that it takes
polynomial time to find a maximum clique in a graph, given a three-clique
extendable ordering.  A polynomial-time generalization for $k$-clique 
extendable orderings is implied; we give details and a time bound next.

\begin{lemma}\label{lem:kext}
If $\sigma = (v_1, v_2, \ldots, v_n)$ is a $k$-clique extendable ordering of
a graph $G$ and $X$ and $Y$ are overlapping cliques of any size greater than or
equal to $k$, such that $|X \cap Y| \geq k-1$ and $X \cup Y$ begins with 
$X \setminus Y$ and ends with $Y \setminus X$ in $\sigma$, then $X \cup Y$ is 
a clique.
\end{lemma}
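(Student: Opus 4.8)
The plan is to prove Lemma~\ref{lem:kext} by induction on $|X \cup Y|$, or equivalently on $|(X \setminus Y) \cup (Y \setminus X)|$, using the $k$-clique extendability of $\sigma$ as the base case. The intuition is that if $X$ and $Y$ are large overlapping cliques meeting the hypotheses, we can peel off one ``extra'' vertex at a time and reduce to the defining situation of a $k$-clique extendable ordering, where two cliques of size exactly $k$ sharing $k-1$ vertices are merged into a $(k+1)$-clique.

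First I would set up notation: write $A = X \setminus Y = \{a_1, \ldots, a_p\}$ listed in $\sigma$-order and $B = Y \setminus X = \{b_1, \ldots, b_q\}$ listed in $\sigma$-order, with $C = X \cap Y$, so $|C| \geq k-1$. Since $X$ and $Y$ overlap, $p, q \geq 1$. The hypothesis that $X \cup Y$ begins with $A$ and ends with $B$ means every vertex of $A$ precedes every vertex of $C$, which precedes every vertex of $B$, in $\sigma$. If $p = q = 1$ and $|C| = k-1$, then $|X| = |Y| = k$ and we are exactly in the defining case: $a_1$ and $b_1$ are adjacent and $X \cup Y$ is a clique, done. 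Otherwise, the idea is to produce a smaller instance. Suppose $p \geq 2$ (the case $q \geq 2$ is symmetric, using the reverse ordering). Consider the clique $X' = X \setminus \{a_1\}$; it still has size $\geq k$, still satisfies $|X' \cap Y| = |C| \geq k-1$, and $X' \cup Y = (X \cup Y) \setminus \{a_1\}$ still begins with $X' \setminus Y = A \setminus \{a_1\}$ and ends with $B$. By the induction hypothesis $X' \cup Y$ is a clique. Now I need to fold $a_1$ back in: I would apply the induction hypothesis again (or the base case) to a suitable pair of cliques that reintroduces $a_1$ and whose union is $X \cup Y$. The natural choice is to take the clique $X = \{a_1\} \cup (X \setminus \{a_1\})$ together with the clique $(X' \cup Y)$ — but to invoke extendability I must check that these overlap with intersection of size $\geq k-1$ and that their union begins with $\{a_1\}$ and ends with the remaining vertices. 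Since $X \cap (X' \cup Y) = X \setminus \{a_1\} = X'$ has size $\geq k-1$, $a_1$ is the unique element of $X \setminus (X' \cup Y)$, and $a_1$ precedes everything else in $X \cup Y$ by the ordering hypothesis, the pair $\{X,\, X' \cup Y\}$ satisfies the hypotheses with the ``extra vertex'' on the $X$-side being the singleton $\{a_1\}$ and the ``extra vertices'' on the other side being $(X' \cup Y) \setminus X = B$; applying the induction hypothesis (with the roles arranged so the singleton side is the ``begins with'' side) yields that $X \cup Y = X \cup (X' \cup Y)$ is a clique.

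The main obstacle, and the point that needs care, is that the definition of $k$-clique extendability is stated only for two cliques of size exactly $k$ with intersection exactly $k-1$, so when I re-fold $a_1$ I must make sure the pair I feed to the induction hypothesis genuinely has intersection of size $\geq k-1$ and the correct begins-with/ends-with structure — in particular $|X' | \geq k-1$ is guaranteed only because $|X| \geq k$, and I should double-check the degenerate situation $|X| = k$, $p \geq 2$ forces $|C| \leq k-2$, which would violate $|X \cap Y| \geq k-1$; so in fact $p \geq 2$ already implies $|X| \geq k+1$ and $|X'| \geq k$, comfortably above the threshold. A clean way to organize the whole argument is to induct on the quantity $p + q$ and, at each step, remove the $\sigma$-first element of $A$ (if $p \geq 2$) or the $\sigma$-last element of $B$ (if $q \geq 2$) to get a strictly smaller instance, then reattach it via one more application of extendability as above; this avoids any asymmetry headaches since removing the $\sigma$-first vertex of $A$ keeps the ``begins with'' structure intact and removing the $\sigma$-last vertex of $B$ keeps the ``ends with'' structure intact. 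The remaining base case $p = q = 1$ with $|C|$ possibly larger than $k-1$ is handled by a short sub-induction on $|C|$: if $|C| \geq k$, replace $X$ by $X$ minus its $\sigma$-first element of $C$ — but since $C$ lies between $A$ and $B$ in $\sigma$, more carefully one replaces $X$ and $Y$ by $k$-subsets containing $a_1$ (resp. $b_1$) and the $k-1$ latest (resp. earliest) vertices of $C$, intersecting in those $k-1$ vertices of $C$, to land exactly in the defining case and conclude adjacency of $a_1$ and $b_1$; then extendability gives $X \cup Y$ is a clique. I expect writing this last reduction to $k$-subsets precisely is where most of the bookkeeping lives.
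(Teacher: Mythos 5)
Your inductive strategy does work, but it is a much longer route than the paper takes, and one detail in your base case is wrong as written. The paper's proof is a single application of the definition, with no induction at all: take an arbitrary $x \in X \setminus Y$, an arbitrary $y \in Y \setminus X$, and \emph{any} $(k-1)$-element subset $Z$ of $X \cap Y$. Because $X \cup Y$ begins with $X \setminus Y$ and ends with $Y \setminus X$, the vertex $x$ precedes all of $Z$ and $y$ follows all of $Z$ in $\sigma$, so the two $k$-cliques $\{x\} \cup Z$ and $Z \cup \{y\}$ are exactly in the situation described in the definition of a $k$-clique extendable ordering; hence $x$ and $y$ are adjacent, and since all other pairs in $X \cup Y$ lie inside $X$ or inside $Y$, the union is a clique. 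Your peeling induction on $p+q$ is sound: the measure drops both when you pass to $(X', Y)$ and when you re-fold $a_1$ via the pair $(X, X' \cup Y)$ (measure $1+q < p+q$), and your check that $p \geq 2$ forces $|X| \geq k+1$ is the right one. The genuine flaw is in your final reduction for the case $p=q=1$, $|C| \geq k$: taking the $k-1$ \emph{latest} vertices of $C$ on the $a_1$ side and the $k-1$ \emph{earliest} on the $b_1$ side produces two $k$-cliques whose intersection has size $\max(0, 2(k-1)-|C|) \leq k-2$, so the defining condition (which needs intersection of size exactly $k-1$) cannot be invoked. The fix is to use the \emph{same} $(k-1)$-subset $Z \subseteq C$ on both sides; no earliest/latest choice is needed, since the begins-with/ends-with hypothesis already places all of $C$ between $a_1$ and $b_1$. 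Once you make that fix, your base case is precisely the paper's argument specialized to singletons, which shows the surrounding induction buys you nothing: the direct argument already handles arbitrary $|X \setminus Y|$ and $|Y \setminus X|$.
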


\begin{proof}
It suffices to show that every element of $X \setminus Y$ is adjacent
to every element of $Y \setminus X$.  Let $x$ be an arbitrary element
of $ X \setminus Y$, $y$ be an arbitrary element of
$Y \setminus X$, and $Z$ be any $k-1$ elements of $X \cap Y$.
Then $\{x\} \cup Z$ and $Z \cup \{y\}$ are two $k$-cliques and, 
by the definition of a $k$-clique extendable ordering,
their union is a clique, and $x$ and $y$ are adjacent.
\end{proof}

\begin{corollary}\label{cor:DP}
If $\sigma = (v_1, v_2, \ldots, v_n)$ is a $k$-clique extendable ordering of
a graph $G$, $X$ is a $k$-clique ending with $\{v\}$ and $Z$ is a largest
clique of $G$ ending with the $(k-1)$-clique $X \setminus \{v\}$, then 
$Z \cup \{v\}$ is a largest clique of $G$ ending with $X$.
\end{corollary}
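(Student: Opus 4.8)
The plan is to establish two things: (i) $Z \cup \{v\}$ is a clique that ends with $X$ in $\sigma$, and (ii) no clique ending with $X$ has more than $|Z|+1$ vertices. Throughout I assume $k \ge 2$ (which holds in all our applications, where $k = \lfloor \lambda \rfloor + 1$ and $\lambda \ge 1$), so that $X \setminus \{v\}$ is nonempty.

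First I would pin down the positional relationships in $\sigma$, which is pure bookkeeping and does not use that $Z \cup \{v\}$ is a clique. Since $X$ ends with $\{v\}$, the vertex $v$ follows every vertex of $X \setminus \{v\}$; since $Z$ ends with $X \setminus \{v\}$, every vertex of $Z \setminus (X \setminus \{v\})$ precedes every vertex of $X \setminus \{v\}$. Combining these, $v$ follows every vertex of $Z$, so in particular $v \notin Z$, and $Z \cup \{v\}$ ends with $X$: its last $k$ vertices in $\sigma$ are the vertices of $X \setminus \{v\}$ followed by $v$, and the remaining vertices, namely $Z \setminus X$, all occur earlier.

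Next I would show $Z \cup \{v\}$ is a clique. If $Z = X \setminus \{v\}$ then $Z \cup \{v\} = X$, which is a clique, so assume otherwise; then $Z \supsetneq X \setminus \{v\}$, hence $|Z| \ge k$. Moreover $v \in X \setminus Z$ shows $X \not\subseteq Z$, while $Z \not\subseteq X$ (a subset of $X$ of size at least $k$ would equal $X$, contradicting $v \notin Z$); since $X \cap Z \supseteq X \setminus \{v\} \neq \emptyset$, the cliques $X$ and $Z$ overlap with $|X \cap Z| = k-1$. By the positional analysis above, $Z \cup X = Z \cup \{v\}$ begins with $Z \setminus X$ and ends with $X \setminus Z = \{v\}$, so Lemma~\ref{lem:kext}, applied with its two cliques taken to be $Z$ and $X$, yields that $Z \cup \{v\}$ is a clique.

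Finally, for optimality, let $W$ be any clique of $G$ ending with $X$. Then $X \subseteq W$ and every vertex of $W \setminus X$ precedes all of $X$, so $v$ is the last vertex of $W$, and $W \setminus \{v\}$ is a clique whose vertices outside $X \setminus \{v\}$ all precede $X \setminus \{v\}$; that is, $W \setminus \{v\}$ is a clique ending with $X \setminus \{v\}$. By maximality of $Z$ we get $|W \setminus \{v\}| \le |Z|$, hence $|W| \le |Z| + 1 = |Z \cup \{v\}|$. The one point requiring care is the case split in the middle step: Lemma~\ref{lem:kext} requires both cliques to have size at least $k$, so the degenerate possibility $Z = X \setminus \{v\}$ (size $k-1$) must be dispatched separately, and one must assign the roles in the lemma so that its ``begins with'' set is the one ($Z \setminus X$) that genuinely comes first in $\sigma$.
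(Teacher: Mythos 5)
Your proof is correct and follows essentially the same route as the paper's: optimality comes from observing that any clique $Y$ ending with $X$ has $Y \setminus \{v\}$ ending with $X \setminus \{v\}$, and cliqueness of $Z \cup \{v\} = Z \cup X$ comes from Lemma~\ref{lem:kext} applied to $Z$ and $X$. Your explicit treatment of the degenerate case $Z = X \setminus \{v\}$ (where the lemma's size-$k$ hypothesis fails but $Z \cup \{v\} = X$ trivially) is a worthwhile bit of care that the paper's terser proof passes over.
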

\begin{proof}
For any clique $Y$ ending with $X$, $Y \setminus \{v\}$
is a clique ending with $X \setminus \{v\}$.
$Z \cup \{v\} = Z \cup X$, which is a clique by Lemma~\ref{lem:kext}.
\end{proof}

Corollary~\ref{cor:DP} is the basis of the recurrence for a dynamic programming 
algorithm for finding a maximum clique of $G$, given a $k$-clique extendable
ordering.  We enumerate all
$k$-cliques and then label each $k$-clique $K$ with the maximum size 
$h_K$ of a clique that ends with $K$.  If $(u_1, u_3, \ldots, u_k)$ is the
left-to-right ordering of a $k$-clique in the ordering, then its label 
is one plus the maximum of the labels of cliques of the form 
$(x, u_1, u_2, \ldots, u_{k-1})$.  The size of the maximum clique of $G$ is
the maximum of the labels.  Details and the proof of the following resulting 
time bound is given in the appendix.

\begin{theorem}\label{thm:kextTime}
Given a $k$-clique extendable ordering of a graph $G$, a maximum clique
can be found in $O(km^{k/2})$ time.
\end{theorem}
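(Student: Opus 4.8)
The plan is to analyze the dynamic programming algorithm suggested by Corollary~\ref{cor:DP}, showing that the dominant cost is enumerating $k$-cliques and, for each one, finding its predecessors in the recurrence. First I would bound the number of $k$-cliques. A $k$-clique contains $\binom{k}{2}$ edges, and a standard counting argument (going back to the fact that a $K_k$ is determined by choosing any two of its vertices together with the edge between them, iterated) shows that the number of $k$-cliques in a graph with $m$ edges is $O(m^{k/2})$: more precisely, each $k$-clique can be charged to an unordered $\lfloor k/2\rfloor$-tuple of its edges forming a matching-like structure, or one can simply use the bound that the number of copies of $K_k$ is at most $(2m)^{k/2}/k!$ when $k$ is even and similarly when $k$ is odd. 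I would state this as a self-contained counting lemma (or cite it), since it is the quantitative heart of the time bound.

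Next I would describe the algorithm concretely. Enumerate all $k$-cliques; this can be done in $O(km^{k/2})$ time by a branching search that extends partial cliques vertex by vertex in the order $\sigma$, or by the standard clique-listing routines, and we store each in a dictionary keyed by the sorted tuple of its vertices in $\sigma$-order. Then process the $k$-cliques in order of their last vertex in $\sigma$. For a $k$-clique $K$ with $\sigma$-ordering $(u_1,\dots,u_k)$, by Corollary~\ref{cor:DP} its label $h_K$ is $1$ plus the maximum of $h_{K'}$ over all $k$-cliques $K'$ of the form $(x,u_1,\dots,u_{k-1})$ with $x$ preceding $u_1$ in $\sigma$; if there is no such $K'$, then $h_K = k$ (or $k-1$ plus one, accounting for the base case where the $(k-1)$-clique $\{u_1,\dots,u_{k-1}\}$ is itself the largest clique ending with it). To find these predecessors efficiently, for each $(k-1)$-clique $C$ appearing as a prefix we keep a list of the $k$-cliques that end with $C$; equivalently, when we generate $K = (u_1,\dots,u_k)$ we look up the bucket for the $(k-1)$-clique $(u_1,\dots,u_{k-1})$. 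The total work in the recurrence is then proportional to the number of (predecessor, clique) incidences, and each $k$-clique has its label read by its successors; charging carefully, each of the $O(m^{k/2})$ $k$-cliques is touched $O(k)$ times for the tuple manipulations (sorting its $k$ vertices, hashing, comparing labels), giving $O(km^{k/2})$ overall. Finally the answer is $\max_K h_K$ (with the degenerate cases where the maximum clique has size less than $k$ handled separately in $O(km^{k/2}) \supseteq O(m^{\lceil k/2\rceil})$ time by just testing for the largest complete subgraph of size $<k$ directly, or noting that absence of any $k$-clique is detected during enumeration).

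The correctness argument is a straightforward induction on the position of the last vertex of $K$ in $\sigma$, invoking Corollary~\ref{cor:DP} at each step to justify that the greedy extension of an optimal clique ending with $X\setminus\{v\}$ yields an optimal clique ending with $X$; I would also note that every clique of size $\ge k$ ends with a unique $k$-clique, so the maximum over all $h_K$ indeed equals the size of a maximum clique (when that size is at least $k$). The main obstacle is the time analysis rather than the correctness: one must be careful that the $O(m^{k/2})$ bound on the number of $k$-cliques is applied correctly and that the bookkeeping to locate, for each $k$-clique, its $O(\deg)$-many predecessors does not blow up the running time beyond $O(km^{k/2})$ — in particular, hashing $(k-1)$-subsets and maintaining the buckets must be amortized so that each $k$-clique contributes only $O(k)$ work, which is why the $k$ factor (rather than a $k^2$ or $k\log m$ factor) appears, and this requires either a careful radix-style bucketing on the $\sigma$-indices or an explicit charging scheme that I would spell out in the appendix as promised.
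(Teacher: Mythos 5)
Your overall plan (enumerate the $O(m^{k/2})$ $k$-cliques, then do dynamic programming driven by Corollary~\ref{cor:DP}, with small sizes handled separately) matches the paper, and the correctness argument is fine. The gap is in the time analysis of the recurrence step. You propose that when processing $K=(u_1,\ldots,u_k)$ you look up the bucket of all $k$-cliques of the form $(x,u_1,\ldots,u_{k-1})$ and take the maximum of their labels, and you bound the total work by ``the number of (predecessor, clique) incidences,'' asserting that a careful charging makes this $O(k)$ per clique. That assertion is false as stated: a clique $(x,u_1,\ldots,u_{k-1})$ can be a predecessor of one successor $(u_1,\ldots,u_{k-1},y)$ for every admissible $y$, so a single clique may be touched $\Theta(n)$ times. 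In fact, by the very $k$-clique extendability you are exploiting, each (predecessor, successor) incidence corresponds bijectively to a $(k+1)$-clique $(x,u_1,\ldots,u_{k-1},y)$, so the total incidence count can be $\Theta(m^{(k+1)/2})$ (e.g., in a complete graph), which exceeds the target $O(km^{k/2})$ bound. So the bucket-scan version of your algorithm does not achieve the claimed running time, and the ``explicit charging scheme'' you defer to an appendix cannot exist for it.

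The missing idea, which is the heart of the paper's proof, is aggregation: group the $k$-cliques into blocks according to their last $k-1$ vertices (the paper does this by lexicographically sorting the reversed tuples, so blocks are contiguous), and store for each block a single value, the maximum label over the cliques in that block. Processing cliques in order of their last vertex guarantees that when $(u_1,\ldots,u_k)$ is reached, the block keyed by $(u_1,\ldots,u_{k-1})$ is already complete (its members end at $u_{k-1}$, earlier in $\sigma$), so the label of $K$ is one plus a single stored block maximum, read in $O(1)$ time after an $O(k)$-cost key computation. Each clique then contributes $O(k)$ work to sorting/bucketing and $O(1)$ work to one block maximum and one lookup, giving $O(km^{k/2})$ overall. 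Your hash-bucket setup can be repaired the same way by maintaining a running maximum per bucket instead of a list to be scanned, but as written the analysis does not go through. (Minor additional point: the theorem asks to produce a maximum clique, not just its size, so you should also note that backpointers, or the paper's recursive reconstruction through relevant blocks, recover the clique within the same bound.)
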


It is easy to see that when the vertices of $G$ have positive weights, the 
problem of finding a maximum weighted clique can be solved in the same time bound,
using a trivial variant of Corollary~\ref{cor:DP}.

\section{Optimization problems on dags with bounded $\lambda$ values}\label{sect:opt}

We now show that restricting attention to dags such that $\lambda$
is bounded by a constant makes some otherwise NP-hard problems easy
or gives rise to polynomial-time approximation algorithms that cannot exist
for the class of all dags unless P = NP.  The NP-hard problems we consider
can be trivially solved in linear time on degenerate dags, so we focus
on nondegenerate dags.

\begin{theorem}\label{thm:k-extendable}
Let $G$ be a nondegenerate dag and $k = \lfloor \lambda(G) \rfloor + 1$.
A topological sort of $G$ is
a $k$-clique extendable ordering.
\end{theorem}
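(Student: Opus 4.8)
The plan is to argue by contradiction: suppose $\sigma$ is a topological sort of $G$ that fails to be $k$-clique extendable. Then there are overlapping $k$-cliques $X$ and $Y$ of the underlying undirected graph with $|X \cap Y| = k-1$, such that $X \cup Y$ begins with $X \setminus Y = \{a\}$ and ends with $Y \setminus X = \{b\}$ in $\sigma$, yet either $a$ and $b$ are nonadjacent, or $X \cup Y$ is not a clique. Since $|X \cup Y| = k+1$ and the only pair in $X \cup Y$ not guaranteed adjacent by $X$ and $Y$ being cliques is $\{a,b\}$, the failure must be exactly that $(a,b)$ is a hop: neither $(a,b)$ nor $(b,a)$ is an edge of $G$. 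Because $\sigma$ is a topological sort and $a$ precedes every vertex of $X \cup Y$ while $b$ follows every vertex of $X \cup Y$, in particular $a$ precedes $b$; so every edge of $G$ among these $k+1$ vertices is oriented consistently with $\sigma$, i.e., from the earlier endpoint to the later one.

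The heart of the argument is to extract a forcing cycle of ratio at least $k = \lfloor \lambda(G)\rfloor + 1 > \lambda(G)$, contradicting the earlier theorem that $\lambda(G)$ equals the maximum ratio of a forcing cycle. Here is how I would build it. Let $Z = X \cap Y$, so $|Z| = k-1$, and order the full set $X \cup Y$ as $a = w_0, w_1, \ldots, w_{k-1}, w_k = b$ along $\sigma$, where $\{w_1, \ldots, w_{k-1}\}$ is some arrangement of $Z$ (note $a$ is first and $b$ is last by hypothesis, but the interior vertices of $Z$ interleave in $\sigma$ in whatever order). Now consider the closed walk that goes forward through all $k$ directed edges $a \to w_1 \to w_2 \to \cdots \to w_{k-1} \to b$ — each consecutive pair here is an edge of $G$ because $X = \{a\} \cup Z$ and $Y = Z \cup \{b\}$ are cliques and the orientation follows $\sigma$ — and then closes up with the single hop $(b,a)$. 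This is a cycle on $k+1$ vertices using $k$ directed edges and exactly one hop, so its ratio is $k$. I must check it is a \emph{simple} cycle: the $k+1$ vertices $a, w_1, \ldots, w_{k-1}, b$ are pairwise distinct since $X$ and $Y$ are cliques of size $k$ overlapping in $k-1$ vertices, so $X \cup Y$ has exactly $k+1$ elements. Hence we have a forcing cycle of ratio $k > \lambda(G)$, the desired contradiction.

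The main obstacle I anticipate is the bookkeeping around the definitions of "begins with" and "ends with": I need the hypothesis that $X \cup Y$ begins with $\{a\}$ and ends with $\{b\}$ to conclude that $a$ is $\sigma$-before all of $w_1, \ldots, w_{k-1}, b$ and that $b$ is $\sigma$-after all of them, which is exactly what makes every one of the edges $a \to w_i$, $w_i \to w_j$ (for the consistent order), and $w_{k-1} \to b$ point the "right way" in a topological sort — and, crucially, what forces the closing pair $(b,a)$ to be a backward pair, hence (being a non-edge in \emph{either} direction by our failure assumption, since if $(b,a)$ were an edge $\sigma$ would not be a topological sort) a hop. A secondary point to get right: the definition of $k$-clique extendability only asserts a conclusion when such an $a,b$ configuration occurs, so the contrapositive really is "if $\sigma$ is not $k$-clique extendable then some such bad $X,Y,a,b$ exists," which is immediate from the definition. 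Once the forcing cycle is in hand, invoking the forcing-cycle theorem finishes it with no further computation.
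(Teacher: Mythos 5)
Your argument is correct, but it invokes the dual certificate where the paper uses the primal one. The combinatorial setup is the same in both: order $X \cup Y$ along the topological sort as $a = w_1 < w_2 < \cdots < w_{k+1} = b$, note that every consecutive pair is adjacent (the interior vertices are $X \cap Y$) and hence a forward-directed edge, so $(w_1,\ldots,w_{k+1})$ is a directed path of $k$ edges. At that point the paper argues directly: taking a satisfying utility function $\alpha$ for thresholds with $t_2/t_1 = \lambda(G)$, the path forces $\alpha(b) - \alpha(a) \geq k t_1 > \lambda t_1 = t_2$, so $(a,b)$ must be an edge and $X \cup Y$ is a clique. You instead argue by contradiction: if $(a,b)$ were a hop, closing the path with it gives a simple forcing cycle with $k$ edges and one hop, so the forcing-cycle characterization gives $\lambda(G) \geq k = \lfloor \lambda(G) \rfloor + 1 > \lambda(G)$. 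Both are valid and both rest on $k > \lambda(G)$; the difference is only which side of the min--max theorem of Section~\ref{sect:polyUtility} is used. The paper's route is slightly more self-contained (it needs only the existence of a satisfying assignment at the optimal ratio), while yours leans on the forcing-cycle theorem, which the paper states without a written proof; in exchange, your version produces an explicit forbidden-subgraph witness and does not need to presuppose that the minimum satisfiable ratio is attained by an actual utility function.
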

\begin{proof}
Let $(v_1, v_2, \ldots, v_n)$ be a topological sort, and
let $\alpha$ be a satisfying utility function for $(t_1,t_2)$
such that $t_2/t_1 = \lambda$.
Let $(w_1,w_2, \ldots, w_k)$ and $(w_2, w_3, \ldots, w_k, w_{k+1})$ be the 
left-to-right orderings of two $k$-cliques $K'$ and $K$.
Then $(w_1, w_2, \ldots, w_{k+1})$ is a directed path in $G$, hence
$\alpha(w_{k+1}) - \alpha(w_1) \geq kt_1 > t_2$, $(w_1, w_{k+1})$
is an edge and $K \cup K'$ is a clique.
\end{proof}

\begin{corollary}\label{cor:exact}
It takes $O(\lambda m^{\lfloor \lambda+1 \rfloor / 2})$
time to find a maximum clique in a connected nondegenerate dag $G$.
\end{corollary}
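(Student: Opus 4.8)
The plan is to combine the three results immediately preceding this corollary. By Theorem~\ref{thm:k-extendable}, a topological sort of the connected nondegenerate dag $G$ is a $k$-clique extendable ordering for $k = \lfloor \lambda(G) \rfloor + 1$. By Theorem~\ref{thm:kextTime}, once we have such an ordering, a maximum clique can be found in $O(km^{k/2})$ time. Substituting $k = \lfloor \lambda + 1 \rfloor$ gives the claimed $O(\lambda m^{\lfloor \lambda + 1 \rfloor/2})$ bound, provided we can actually produce a topological sort within that budget. So the proof has two parts: (1) establish that the topological sort itself, together with any preprocessing needed, is cheap enough, and (2) assemble the pieces.

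First I would dispatch the preprocessing. A topological sort of $G$ is computable in $O(n + m)$ time by the standard algorithm. For a connected dag, $n - 1 \le m$, so $O(n+m) = O(m)$; since $G$ is nondegenerate we have $\lambda \ge 1$ and $k \ge 2$, so $m = O(m^{k/2}) = O(\lambda m^{\lfloor \lambda+1 \rfloor/2})$ and the topological-sort step is absorbed into the stated bound. (One should note that $\lambda(G)$ itself need not be computed to run this algorithm — we only need $k$; but in any case the faster $O(nm/\lambda)$ bound of Theorem~18 for computing $\lambda$ is also dominated by $m^{k/2}$ once $k \ge 2$, so even computing $\lambda$ explicitly first does no harm.)

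Then I would invoke the two theorems in sequence: apply Theorem~\ref{thm:k-extendable} to get that the topological sort is $k$-clique extendable for $k = \lfloor \lambda \rfloor + 1 = \lfloor \lambda + 1 \rfloor$, then feed that ordering into the algorithm of Theorem~\ref{thm:kextTime}, which returns a maximum clique in $O(km^{k/2})$ time. Since $k = \lfloor \lambda + 1 \rfloor \le \lambda + 1 = O(\lambda)$ (using $\lambda \ge 1$), the factor $k$ is $O(\lambda)$, and the exponent $k/2$ is $\lfloor \lambda+1 \rfloor/2$ exactly, yielding the bound as stated. The connectedness hypothesis is used only to guarantee $m \ge n-1$ so that the $O(n+m)$ preprocessing is subsumed; it could be dropped at the cost of an additive $O(n)$ term, or by running the algorithm componentwise.

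The main obstacle, such as it is, is purely bookkeeping: making sure the $O(n+m)$ topological-sort cost and (if one chooses to compute it) the $\lambda$-finding cost are genuinely dominated by $m^{\lfloor \lambda+1 \rfloor/2}$, which requires the observations $k \ge 2$ and $m \ge n - 1$. There is no substantive new argument — the corollary is an immediate composition of Theorems~\ref{thm:k-extendable} and~\ref{thm:kextTime}.
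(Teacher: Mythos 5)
There is a genuine gap: your algorithm needs $k=\lfloor\lambda(G)\rfloor+1$ before it can invoke the dynamic program of Theorem~\ref{thm:kextTime}, and you never supply a way to obtain $k$ within the claimed budget. The remark that ``$\lambda(G)$ itself need not be computed --- we only need $k$'' is circular, since $k$ is by definition $\lfloor\lambda\rfloor+1$; and your fallback claim that the $O(nm/\lambda)$ cost of computing $\lambda$ is dominated by $m^{k/2}$ once $k\ge 2$ is false for $k=2$ and $k=3$. For a sparse connected dag with $m=\Theta(n)$ and $1\le\lambda<2$, the corollary promises $O(\lambda m^{\lfloor\lambda+1\rfloor/2})=O(m)=O(n)$, whereas computing $\lambda$ costs $\Theta(nm/\lambda)=\Theta(n^2)$; for $2\le\lambda<3$ the promised bound is $O(m^{3/2})=O(n^{3/2})$, again strictly smaller than $\Theta(nm)=\Theta(n^2)$. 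Only when $\lambda\ge 3$ (so $\lfloor\lambda+1\rfloor\ge 4$) does $O(nm/\lambda)\le O(m^2)\le O(m^{\lfloor\lambda+1\rfloor/2})$ hold, using $n\le m+1$ for a connected dag. Your topological-sort bookkeeping is fine, but it is not where the difficulty lies.

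This is exactly the obstacle the paper's proof is designed to avoid (``to avoid an additive $O(nm/\lambda)$ term''). It first runs the dynamic program on a topological sort pretending $k=2$, in $O(m)$ time, and returns the output if it is a clique (this is sound because the DP label is always an upper bound on the true maximum clique size, so a clique of that size must be maximum even if the extendability assumption was wrong); it then repeats this pretending $k=3$, in $O(m^{3/2})$ time. If neither attempt yields a clique, Theorem~\ref{thm:k-extendable} certifies $\lambda\ge 3$, and only then is $\lambda$ computed, its $O(nm/\lambda)=O(m^2)$ cost now being subsumed by $O(m^{\lfloor\lambda+1\rfloor/2})$, after which the DP is run with the true $k=\lfloor\lambda\rfloor+1$. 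Your composition of Theorems~\ref{thm:k-extendable} and~\ref{thm:kextTime} is the right skeleton, but without this staged argument (or some substitute, e.g., iterating $k=2,3,4,\dots$ and stopping as soon as the DP output is a clique, whose total cost telescopes to $O(\lambda m^{\lfloor\lambda+1\rfloor/2})$), the proposal does not establish the stated time bound.
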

\begin{proof}
To avoid an additive $O(nm/\lambda)$ term, run
the dynamic programming algorithm on a topological sort under the assumption 
that it is a 2-clique extendable ordering in $O(m)$ time by 
Theorem~\ref{thm:kextTime}, and return the result if 
it is a clique.
Otherwise, do the same under the assumption that it is a 3-clique extendable ordering,
in $O(m^{3/2})$ time.  If a max clique
has not yet been returned, then $\lambda \geq 3$ by Theorem~\ref{thm:k-extendable},
so compute $\lambda$ in $O(nm/\lambda) = O(m^2)$ time, which is now subsumed
by the bound we want to show.
A topological sort is a $\lfloor \lambda \rfloor +1$ extendable ordering
by Theorem~\ref{thm:k-extendable}, so 
it takes $O(\lambda m^{\lfloor \lambda+1 \rfloor / 2})$ time to find a maximum clique 
by Theorem~\ref{thm:kextTime}.
\end{proof}

Even if $\lambda$ is bounded by a moderately large constant,
this bound could be prohibitive in 
practice, but it also gives an approximation
algorithm that allows a tradeoff between time and approximation factor:

\begin{corollary}\label{cor:cliqueApprox}
Given a connected nondegenerate dag $G$ and integer $i$ such that $1 \leq i \leq \lambda$, 
a clique whose size is within 
a factor of $i$ of the size of a maximum clique can be found
in $O((\lambda/i) m^{(\lfloor \lambda/i \rfloor + 1 )/2})$ time.
\end{corollary}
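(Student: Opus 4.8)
The plan is to reduce the desired factor-$i$ approximation to an \emph{exact} maximum-clique computation on a sparsified auxiliary dag, to which Theorems~\ref{thm:kextTime} and~\ref{thm:k-extendable} apply with the reduced parameter $k := \lfloor \lambda/i \rfloor + 1$ in place of $\lfloor\lambda\rfloor+1$. First I would compute $\lambda = \lambda(G)$ together with a satisfying utility function $\alpha$ for thresholds $(t_1,t_2)$ with $t_2/t_1 = \lambda$, using the $O(nm/\lambda)$-time algorithm of Section~\ref{sect:faster}, and then form the dag $G^\ast = (V,E^\ast)$ with $E^\ast = \{(u,v)\in E(G): \alpha(v)-\alpha(u)\ge i\,t_1\}$. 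Because $E^\ast\subseteq E(G)$, the dag $G^\ast$ has at most $m$ edges, every clique of $G^\ast$ is a clique of $G$, and every topological sort $\sigma$ of $G$ is also a topological sort of $G^\ast$.

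The two things to check are that $\sigma$ is a $k$-clique-extendable ordering of $G^\ast$ and that the clique number of $G^\ast$ is at least $\omega(G)/i$, where $\omega(G)$ denotes the clique number of $G$; both follow from the single inequality $k\,i > \lambda$, which holds because $k=\lfloor\lambda/i\rfloor+1 > \lambda/i$. For the first, I would mimic the proof of Theorem~\ref{thm:k-extendable} inside $G^\ast$: if $(w_1,\dots,w_k)$ and $(w_2,\dots,w_{k+1})$ are the left-to-right $\sigma$-orderings of two overlapping $k$-cliques of $G^\ast$, then $w_1\to w_2\to\cdots\to w_{k+1}$ is a directed path whose $k$ edges all lie in $E^\ast$, so $\alpha(w_{k+1})-\alpha(w_1)\ge k\,i\,t_1 > \lambda t_1 = t_2$; hence $(w_1,w_{k+1})$ is an edge of $G$, and since this difference also exceeds $i\,t_1$ it lies in $E^\ast$, so $\{w_1,\dots,w_{k+1}\}$ is a clique of $G^\ast$. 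For the second, let $p_1,p_2,\dots,p_\omega$ be the $\sigma$-order of a maximum clique of $G$ (a directed path in $G$, since consecutive pairs are edges) and set $C' = \{p_1,p_{1+i},p_{1+2i},\dots\}$, which has $\lceil \omega/i\rceil\ge \omega/i$ vertices; for $s<t$ the subpath from $p_{1+si}$ to $p_{1+ti}$ is a directed path of $G$ with at least $i$ edges, so $\alpha(p_{1+ti})-\alpha(p_{1+si})\ge i\,t_1$, and since $(p_{1+si},p_{1+ti})\in E(G)$ it lies in $E^\ast$; thus $C'$ is a clique of $G^\ast$ of size at least $\omega(G)/i$.

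Given these two facts, I would run the dynamic-programming algorithm of Theorem~\ref{thm:kextTime} on $G^\ast$ with the ordering $\sigma$ and parameter $k$; it returns a maximum clique of $G^\ast$ in $O\bigl(k\cdot m(G^\ast)^{k/2}\bigr) = O\bigl((\lambda/i)\,m^{(\lfloor\lambda/i\rfloor+1)/2}\bigr)$ time, where $k=\Theta(\lambda/i)$ since $1\le i\le\lambda$. The clique it returns is a clique of $G$ of size between $\omega(G)/i$ and $\omega(G)$, hence within a factor $i$ of a maximum clique. The $O(nm/\lambda)$ cost of computing $\lambda$ and $\alpha$ is subsumed by this bound whenever $\lfloor\lambda/i\rfloor\ge 3$ (then $m^{k/2}\ge m^2\ge nm/\lambda$, using $m\ge n-1$ for the connected dag $G$); the remaining cases, in which $k\le 3$, I would dispatch exactly as in the proof of Corollary~\ref{cor:exact}, by first running the dynamic program directly on $G$ under the successive assumptions that $\sigma$ is $2$- and $3$-clique-extendable and returning the result as soon as it is a clique. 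The one genuinely nontrivial point is the choice of the ``big-step'' edge set $E^\ast$ and the observation that taking every $i$th vertex of a maximum clique of $G$ yields a large clique of $G^\ast$; once that is in hand, the drop of the extendability parameter from $\lfloor\lambda\rfloor+1$ to $\lfloor\lambda/i\rfloor+1$ and the size guarantee both come straight out of $k\,i>\lambda$, and the rest is bookkeeping.
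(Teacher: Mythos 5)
Your proposal is correct and takes essentially the same route as the paper's proof: keep only the edges of $G$ whose $\alpha$-difference is at least $i\,t_1$, observe that a topological sort is a $(\lfloor\lambda/i\rfloor+1)$-clique-extendable ordering of the sparsified dag (the paper phrases this as $\lambda(G')\le\lambda(G)/i$ and cites Theorem~\ref{thm:k-extendable}, while you inline that argument), run the dynamic program of Theorem~\ref{thm:kextTime}, and note that every $i$th vertex of a maximum clique of $G$ survives as a clique of the sparsified dag, giving the factor-$i$ guarantee. Your additional bookkeeping for the $O(nm/\lambda)$ cost of computing $\lambda$ and $\alpha$ addresses a point the paper's proof elides (its introduction states the bound with an additive $O(nm/\lambda)$ term).
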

\begin{proof}
Let $G'$ be the result of removing the edges $\{(u,v)| (u,v) \in E(G)$
and $\alpha(v) - \alpha(u) < i\}$.  A satisfying function $\alpha$ for
$G$ and thresholds $(1, \lambda(G))$ is also a satisfying function for $G'$ and 
thresholds $(i, \lambda(G))$, so $\lambda(G') \leq \lambda(G)/i$.  Applying 
Theorems~\ref{thm:kextTime} 
and~\ref{thm:k-extendable}, we get a maximum clique of $G'$
in $O((\lambda/i) m^{(\lfloor \lambda/i \rfloor + 1 )/2})$ time.
A maximum clique of $G$ induces a directed path $(v_0, v_1, ..., v_k)$ in $G$, 
and $\{v_0, v_i, v_{2i}, \ldots, v_{\lfloor k/i \rfloor}\}$ is a clique of $G'$,
so the size of a maximum clique in $G'$ is within a factor of $i$ of the 
size of a maximum clique in $G$.
\end{proof}

If $\lambda(G) < 2$,
a maximum independent set in $G$ can be obtained in polynomial time, 
since $G$ is transitive~\cite{gavril00}.  
However, even when $\lambda(G) = 2$,
the problem of determining whether $G$ has an independent set of size $k$
is NP-complete.  This is seen as follows.  It is NP-complete
to decide whether a 3-colorable graph has an independent set of
a given size $k$, even when the 3-coloring is given~\cite{kratochvilNesetril90}.
Given such a graph $G'$, $k$, and three-coloring, let $C_1$, $C_2$, and $C_3$ 
be the three color classes.
Every edge $e$ has endpoints in two of the classes; orient $e$
from the endpoint in the class with the smaller subscript to the
endpoint in the class with the larger subscript.
Doing this for all edges results in a dag $G$ such that
$\lambda(G) = 2$, since, for each vertex $x$, if $x \in C_i$,
assigning $\alpha(x) = i$ gives a satisfying assignment of utility
values for $t_2 = 2$ and $t_1 = 1$.
There is an independent set of size $k$ in $G$ if and only if there
is one in $G'$.

\begin{theorem}\label{thm:IS}
For $G$ in the class of dags where 
$\lfloor \lambda(G) \rfloor + 1  \leq k$, there is a polynomial
$k$-approximation algorithm for the problem of finding a maximum
independent set in $G$.
\end{theorem}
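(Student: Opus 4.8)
The plan is to work directly with the utility values $\alpha$ rather than through the $k$-clique-extendable machinery, exploiting an asymmetry between cliques and independent sets. A clique is a directed path, whose endpoints can have arbitrarily large $\alpha$-difference, but an independent set cannot spread out in $\alpha$ at all: if $\alpha$ is a satisfying utility function for $(1,\lambda)$ with $\lambda=\lambda(G)$ and $S$ is an independent set, then any two distinct $u,w\in S$ are non-adjacent, so neither $(u,w)$ nor $(w,u)$ is an edge, and the upper-bound constraint of the definition applied to both ordered pairs gives $|\alpha(u)-\alpha(w)|\le\lambda$. Hence $\max_{u\in S}\alpha(u)-\min_{u\in S}\alpha(u)\le\lambda$; every independent set lies in an $\alpha$-window of width $\lambda$. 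I would isolate this as the key lemma.

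Next I would show that for each vertex $v$, the subgraph induced on $W_v=\{\,u\in V:\alpha(v)\le\alpha(u)\le\alpha(v)+\lambda\,\}$ is properly colored by $c(u)=\lfloor\alpha(u)-\alpha(v)\rfloor$ using at most $\lfloor\lambda(G)\rfloor+1\le k$ colors. Two vertices with the same color $c$ have $\alpha(u)-\alpha(v)$ and $\alpha(w)-\alpha(v)$ in the common interval $[c,c+1)$, so $|\alpha(u)-\alpha(w)|<1=t_1$; since an edge (in either direction) forces an $\alpha$-difference of at least $t_1$, $u$ and $w$ are non-adjacent, so each color class is independent. And for $u\in W_v$ we have $\alpha(u)-\alpha(v)\in[0,\lambda]$, so $c(u)\in\{0,1,\dots,\lfloor\lambda\rfloor\}$, a set of $\lfloor\lambda\rfloor+1\le k$ values.

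The algorithm then computes $\lambda(G)$ and a satisfying assignment $\alpha$ for $(1,\lambda)$ — polynomial by the results of Section~\ref{sect:polyUtility} — and for each vertex $v$ it colors $G[W_v]$ as above and records the largest color class $I_v$; it returns the largest $I_v$ over all $v$. Each $I_v$ is an independent set of $G$. For the guarantee, let $S$ be a maximum independent set and let $v^*$ be a vertex of $S$ of minimum $\alpha$-value; by the window lemma $S\subseteq W_{v^*}$, so $|V(W_{v^*})|\ge|S|$, and a coloring of $G[W_{v^*}]$ with at most $k$ colors has a class of size at least $|V(W_{v^*})|/k\ge|S|/k$. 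Therefore the returned set has size at least $|S|/k$, and the running time is dominated by the computation of $\lambda$, hence polynomial.

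I expect the only real content to be the window lemma: once one observes that independent sets are confined to narrow $\alpha$-windows, the coloring argument goes through even though edges may have unbounded $\alpha$-span globally. The one point that needs a careful check is the color count — that $\lfloor\lambda\rfloor+1$ colors, and not one more, suffice on each window — which is exactly the statement that the floor of a number in $[0,\lambda]$ takes only the values $0,1,\dots,\lfloor\lambda\rfloor$.
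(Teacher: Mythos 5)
Your proof is correct and is essentially the paper's own argument: both rest on the same two facts, namely that vertices whose $\alpha$-values differ by less than $t_1$ are non-adjacent (so each width-$t_1$ slice is independent), and that a maximum independent set has all its $\alpha$-values in an interval of width $t_2$, which is covered by $k$ intervals of width $t_1$, giving the factor-$k$ pigeonhole bound. The only difference is cosmetic: the paper directly returns the single most populous interval of the form $[x, x+t_1)$, whereas you color each width-$\lambda$ window and take the best color class, which amounts to the same computation and the same guarantee.
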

\begin{proof}
Find a satisfying assignment of utility values
for $(t_1, t_2)$ such that $t_2/t_1 = \lambda(G)$, 
then find an interval of the form $[x, x+ t_1)$ such that the size of the set
$Y$ whose $\alpha$ values are in the interval is maximized.
$Y$ is an independent set, since no pair of them has $\alpha$ values that 
differ by $t_1$.  Return these vertices as an independent set.

For the approximation bound, let $X$ be a maximum independent set.  
The $\alpha$ values of $X$ lie in an interval of the form $[y, y + t_2]$,
which is a subset of the union
$[y, y + kt_1)$, of $k$ intervals of the form $[x, x+t_1)$, hence
$|X| \leq k|Y|$.
\end{proof}

Proofs of the following make similar use of the availability of
satisfying $\alpha$ values are given in the appendix.

\begin{theorem}\label{thm:coloring}
For $G$ in the class of dags where $\lfloor \lambda(G) \rfloor + 1 \leq k$,
there is a polynomial
$k$-approximation algorithm for the problem of finding a minimum
coloring of $G$.
\end{theorem}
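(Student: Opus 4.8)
The plan is to give an algorithm that needs only a topological sort of $G$, and to bring in a satisfying utility function purely for the analysis (so the method parallels the use of $\alpha$ values in Theorem~\ref{thm:IS}). Specifically, I would run the standard dynamic program over a topological sort that computes, for each vertex $v$, the number $h(v)$ of vertices on a longest directed path of $G$ ending at $v$ (that is, $h(v)=1$ if $v$ is a source and $h(v)=1+\max\{h(u):(u,v)\in E\}$ otherwise), and I would use $h(v)$ as the color of $v$. Since an edge $(u,v)$ forces $h(v)\ge h(u)+1$, adjacent vertices get distinct colors, so this is a proper coloring of the underlying undirected graph; it uses at most $L$ colors, where $L$ is the number of vertices on a longest directed path of $G$. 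Note that this algorithm does not even consult $k$, so the same coloring serves for every admissible $k$; computing the $h$ values takes $O(n+m)$ time.

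It then remains to prove the approximation bound $L\le k\,\chi(G)$, and this is where I would use the utility function. The hypothesis $\lfloor\lambda(G)\rfloor+1\le k$ is equivalent to $\lambda(G)<k$, so there is a satisfying assignment $\alpha$ for thresholds $(t_1,t_2)$ with $t_2/t_1=\lambda(G)$, and hence $k\,t_1>t_2$. Let $(v_0,v_1,\dots,v_{L-1})$ be a longest directed path. Summing the edge inequalities along the path gives $\alpha(v_j)-\alpha(v_i)\ge(j-i)\,t_1$ for $i<j$, so whenever $j-i\ge k$ we get $\alpha(v_j)-\alpha(v_i)>t_2$, which by the defining property of a double-threshold digraph forces $(v_i,v_j)$ to be an edge. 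Therefore $S=\{\,v_t:t\text{ a nonnegative multiple of }k,\ t\le L-1\,\}$ is a clique, since any two of its vertices have index gap at least $k$, and an easy count gives $|S|=\lfloor(L-1)/k\rfloor+1\ge L/k$.

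Putting the pieces together, $\chi(G)\ge\omega(G)\ge|S|\ge L/k$, so the coloring returned uses at most $L\le k\,\chi(G)$ colors; it is a $k$-approximation, produced in $O(n+m)$ time. (If one also wishes to certify that $G$ lies in the stated class, $\lambda(G)$ can be computed in $O(nm/\lambda)$ time by the algorithm discussed earlier.) I do not expect a genuine obstacle here: the only points needing a little care are the elementary inequality $\lfloor(L-1)/k\rfloor+1\ge L/k$, and the observation that although an arbitrary edge of $G$ may carry an arbitrarily large $\alpha$-difference, the edges \emph{within} a directed path already suffice to force the long-range chords that assemble the clique $S$ — exactly the mechanism underlying $k$-clique extendability in Theorem~\ref{thm:k-extendable}.
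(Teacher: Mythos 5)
Your proof is correct, but it takes a genuinely different route from the paper's. The paper's algorithm uses the satisfying utility function itself to produce the coloring: it partitions the range of $\alpha$ into buckets of width $t_1$, each of which is an independent set, and then bounds the number of buckets against an optimum coloring by a charging/induction argument (each optimal color class is an independent set, so its $\alpha$ values span at most $t_2 < kt_1$, i.e.\ at most $k$ buckets' worth). You instead color by longest-path levels $h(v)$, which needs only a topological sort and no knowledge of $\alpha$, $\lambda$, or even $k$, and you push the utility function entirely into the analysis: along a longest directed path the accumulated edge constraints force all chords spanning at least $k$ positions (exactly the mechanism of Theorem~\ref{thm:k-extendable}), yielding a clique of size $\lfloor (L-1)/k\rfloor + 1 \geq L/k$, hence $\chi(G) \geq \omega(G) \geq L/k$ while your coloring uses $L$ colors. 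Both arguments are sound; the trade-off is that your version runs in $O(n+m)$ without ever computing a satisfying assignment (whereas the paper's algorithm presupposes the $O(nm/\lambda)$ computation of $\alpha$) and additionally outputs a clique certifying the approximation ratio, while the paper's bucket argument compares directly to an arbitrary optimal coloring rather than going through the clique number, and reuses the utility function it has already computed for the other optimization results in Section~\ref{sect:opt}. One small presentational caveat: as in the paper, the argument should be restricted to nondegenerate dags (for degenerate dags the problem is solved exactly in linear time), and your equivalence $\lfloor\lambda\rfloor+1\leq k \iff \lambda<k$ uses that $k$ is an integer, which is the intended reading.
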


\begin{theorem}\label{thm:cCover}
For $G$ in the class of dags where 
$\lfloor \lambda(G) \rfloor +1 = k$, there is a polynomial
$k$-approximation algorithm for the problem of finding a minimum
clique cover of $G$.
\end{theorem}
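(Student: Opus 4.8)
The plan is to reduce minimum clique cover of $G$ to minimum chain cover of a poset built from the \emph{forced} edges, and then to pay the factor $k$ in exactly the way the independent-set bound of Theorem~\ref{thm:IS} does. First I would obtain a satisfying assignment $\alpha$ of utility values for some $(t_1,t_2)$ with $t_2/t_1 = \lambda(G)$; as for the preceding theorems, this is available in polynomial time. Then I would form the digraph $H = (V, E_H)$ with $E_H = \{(u,v) : \alpha(v) - \alpha(u) > t_2\}$. Two observations drive the argument. First, every edge of $H$ is an edge of $G$, since $\alpha(v)-\alpha(u) > t_2$ forces $(u,v)$ to be an edge; hence any chain of $H$ (a set of vertices pairwise comparable in $H$) is a set of vertices pairwise adjacent in $G$, i.e.\ a clique of $G$. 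Second, $H$ is transitive: $\alpha(v)-\alpha(u) > t_2$ and $\alpha(w)-\alpha(v) > t_2$ give $\alpha(w)-\alpha(u) > 2t_2 > t_2$; and $H$ is acyclic because $G$ is. So $H$ is a poset, and a minimum chain cover of $H$ — computable in polynomial time by the algorithmic form of Dilworth's theorem (a bipartite matching computation), of size equal to the largest antichain of $H$ — is a partition of $V$ into cliques of $G$, i.e.\ a clique cover of $G$.

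It remains to bound the number of cliques produced, namely the size of a maximum antichain $S$ of $H$, by $k$ times the minimum clique cover number of $G$. Since every clique cover of $G$ uses at least as many cliques as the size of a maximum independent set of $G$, it suffices to show $|S| \le k$ times that independent-set size, and here I would argue exactly as in the proof of Theorem~\ref{thm:IS}. No two vertices of $S$ are comparable in $H$, so for all $u,v \in S$ we have $|\alpha(u)-\alpha(v)| \le t_2$; hence the $\alpha$ values of $S$ lie in a closed interval of length $t_2$. Because $\lambda < \lfloor \lambda \rfloor + 1 = k$ we have $t_2 = \lambda t_1 < k t_1$, so that interval is covered by $k$ half-open intervals of length $t_1$. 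The vertices of $S$ whose $\alpha$ values fall in one such length-$t_1$ interval form an independent set of $G$, since any edge of $G$ (in either direction) forces an $\alpha$-gap of at least $t_1$. By pigeonhole one of the $k$ pieces has at least $|S|/k$ vertices, so $|S| \le k$ times the maximum independent set size $\le k$ times the minimum clique cover number, as required.

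The resulting algorithm is: compute $\alpha$; build $H$; run a standard Dilworth/bipartite-matching chain-cover algorithm on $H$; return the chains as cliques. Every step is polynomial, the dominant cost being that of computing $\alpha$. I do not anticipate a real obstacle; the only things needing care are the inequality chain $t_2 < k t_1$ and the half-open-interval bookkeeping in the pigeonhole step, which mirror Theorem~\ref{thm:IS} verbatim. The one genuine idea is the passage to the forced-edge poset $H$: it turns the non-transitive dag $G$ into a poset by discarding exactly the edges that a clique of $G$ never needs when its vertices are $\alpha$-far apart, after which Dilworth makes the chain cover optimal within $H$ and the loss relative to $G$ is controlled by the same interval-packing estimate that governs independent sets.
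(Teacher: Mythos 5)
Your proof is correct, but it reaches the factor-$k$ bound by a different mechanism than the paper. The paper's algorithm is greedy and recursive: it finds a window $[y,y+t_2]$ containing the most $\alpha$ values, grows a single clique through a vertex of that window by repeatedly jumping by more than $t_2$ in both directions, removes it, and recurses; the analysis charges each extracted clique to one vertex of the densest window $X$ and observes that any clique of an optimal cover contains at most $k$ vertices of $X$ (pairwise $\alpha$-gaps within a clique are at least $t_1$, while $X$ spans at most $t_2<kt_1$). You instead pass to the poset $H$ of forced edges ($\alpha(v)-\alpha(u)>t_2$), which is indeed transitive and contained in $G$, take an optimal Dilworth chain partition of $H$ via bipartite matching, and bound its size --- the width of $H$ --- by $k$ times the independence number of $G$ using exactly the interval-packing argument of Theorem~\ref{thm:IS}, then use the trivial bound (independence number) $\leq$ (clique cover number). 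Both arguments lose the same factor $k$ through the same $t_1$-versus-$t_2$ interval comparison, but your route buys a cleaner correctness story (the cover is provably optimal among covers by $H$-chains, and the approximation analysis reduces to the already-proved independent-set estimate), at the price of a matching computation on a comparability digraph that may have $\Theta(n^2)$ edges, whereas the paper's greedy needs only sorting and linear scans once $\alpha$ is in hand. Your handling of the endpoint cases ($t_2=\lambda t_1<kt_1$, half-open buckets, degenerate dags excluded as in the paper) is sound, so I see no gap.
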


\section{$O(nm/\lambda)$ bounds for finding satisfying utility 
functions, $\lambda$, and certificates}\label{sect:faster}

In this section, we first show how to find a satisfying assignment of utility 
values for given thresholds $(t_1,t_2)$, in $O(nm/r)$ time, where
$r = t_2/t_1$.  We then show how to find $\lambda$ in $O(nm/\lambda)$
time.  By solving the second problem to find $\lambda$, then
selecting $(t_1,t_2)$ such that $t_2/t_1 = \lambda$ and solving the first,
we get the certificates for $\lambda$, that is, a satisfying assignment
and a cycle such that the ratio of edges to hops is $\lambda$, which
comes from a zero-weight cycle in $G_d$.

For both of these problems, we use the following.
When $G$ is an arbitrary digraph where each vertex $x$ has a weight
$w(x)$ and each edge $(y,z)$ has a weight $w(y,z)$, it takes
$O(m)$ time to find $w'(v) = \min(\{\infty\} \cup \{w(u) + w(u,v)|(u,v)$
is an edge of $G\}$ for each vertex $v$ of $G$.
Let us call this the {\em general relaxation procedure}.
In the special case where $G$ is a dag, it takes $O(m)$ time to find
$w'(v) = \min_u (\{w(u) + w_{uv})\})$,
where $w_{uv}$ is the minimum weight of any path from $u$ to $v$ and
$w_{vv}$ = 0.
This can be used to solve the single-source shortest paths problem
on a connected dag in $O(m)$ time~\cite{Cormen}.
Let us call this the {\em dag variant} of the relaxation procedure.

In a digraph with edge weights, let the {\em length} of a walk be the
number of occurrences of edges on the walk and its {\em weight} be
the sum of weights of occurrences of edges.  If an edge occurs $k$ times
on the walk, it contributes $k$ to the length, and if its weight
is $w$, it contributes $kw$ to the weight and $kw$ to the
number of (occurrences of) edges of weight $w$ on the walk.

\subsection{Finding a satisfying utility function or a forbidden subgraph for
$(t_1,t_2)$.}\label{sect:BF}

The Bellman-Ford algorithm is a dynamic
programming algorithm that works as 
follows on a connected digraph $G$ where a vertex $s$ has been added
that has an edge of weight zero to all other vertices.
Let $D(i,v)$ be the minimum weight of any walk from $s$ to $v$
that has at most $i+1$ edges.  $D(i,v)$ is just the minimum weight of any
walk of length at most $i$ in $G$ ending at $v$; henceforth we omit $s$
from the discussion.  $D(0,v) = 0$ for all $v \in V$.
During the ``$i^{th}$ pass'' the algorithm computes $D(i,v)$ 
as $min(\{(D(i-1,u) + w(u,v)| (u,v) \in E\})$.  This is just an
instance of the general relaxation procedure where $w(v) = D(i-1,v)$
and the loop $(v,v)$ is considered to be an edge of weight 0 for each $v \in V$.
If there is no negative cycle, there is always
a path ending at $v$ that is a minimum-weight walk ending at $v$, so 
$D(n-1,v)$ gives the minimum weight
of any path ending at $v$.  If there is a negative cycle, this
is detected when $D(n,v) < D(n-1,v)$ for some $v$, indicating a walk
of length $n$ of smaller weight of any path, which must have a negative
cycle on it.  By annotating the dynamic programming entries with suitable
pointers, it is possible to find such a cycle within the same bound.
The $n$ passes to compute $D(n,v)$ for all $v$ each
take $O(m)$ time, for a total of $O(nm)$ time.

To exploit the structure of $G_d$ to improve the running time,
we let $B(i,v)$ denote the minimum weight of any path that has at most
$i$ edges of weight $t_2$, rather than at most $i$ edges in total.
We use the elements $B(i,v)$, rather than the elements of $D(i,v)$,
as the elements of the dynamic programming table.
Let us call this {\em reindexing the dynamic
programming table}.  We obtain $B(0,v)$ by assigning $w(v) = 0$
and running the dag variant of the relaxation procedure on the edges of
weight $-t_1$, since they are acyclic.
For pass $i$ such that $i > 0$, any improvements obtained 
by allowing an $i^{th}$ edge of weight $t_2$ are computed with
the general relaxation procedure, where loops are
considered to be edges of weight 0, and, after this, 
any additional improvements obtained by appending additional edges
of weight $-t_1$ are computed by the dag
variant of the relaxation procedure.

Because every vertex has a walk of length and weight 0 ending at it,
$B(i,v) \leq 0$ for $i \geq 0$.  Therefore, for $i > 0$, if $B(i,v) < B(i-1,v)$,
the ratio of edges of weight $-t_1$ to edges
of weight $t_2$ is greater than $r = t_2/t_1$.  
Any such walk
must have more than $ir$ edges of weight $-t_1$, hence length greater than
$i (r+1)$.
Therefore, if there is no negative cycle in $G_d$, for
$i = \lfloor (n-1)/(r+1) \rfloor + 1$, $B(i,v) = D(n-1,v)$, and
a negative cycle occurs if $B(i+1,v) < B(i,v)$ for this $i$ and
some $v$.  A negative cycle can be found by the standard technique of annotating
the results of the relaxation operations with pointers to earlier
results.  The advantage of reindexing the table is that the 
algorithm now takes $O(n/r)$ passes instead of $n$ of them.   

To get the $O(nm/r)$ bound, it remains to show how to perform each 
pass in $O(m)$ time.  The bottleneck is evaluating
$w'(v) = \min \{\{w(v)\} \cup \{w(u) + t_2| (u,v)$ is an edge
of weight $t_2\}$ for the general relaxation step.
Since all of the edges have the same weight, we
rewrite this as $w'(v) = min \{w(v), w(x) + t_2\}$
where $x$ minimizes $w(u) = B(i-1,u)$ for all $u$ such that
$w(u,v) = t_2$.
To evaluate this, we just have to find $x$.
At the beginning of the pass, we radix sort the vertices
in ascending order of $B(i-1,*)$, giving
list $L$.  To compute $B'(i,v)$, we mark the vertices that have
an edge to $v$, then traverse $L$ until we find $x$ as the first
unmarked vertex we encounter, then unmark the vertices that have
edges to $v$.  
This takes time proportional to the in-degree of $v$,
hence $O(m)$ time for all vertices in the pass.

\subsection{Finding $\lambda$}

To find $\lambda(G)$, we use the fact that that if $t_2/t_1 = \lambda$,
the corresponding weighting of $G_d$ will give it a zero-weight cycle in
$G_d$, which gives a forcing cycle of ratio $\lambda$ in $G$ as a certificate.

For arbitrary $(t_1, t_2)$, let
the {\em mean weight} of a directed cycle or path of length at least one
in $G_d$ be the weight of the cycle divided by the number of edges.  The 
minimum mean weight of a cycle is the {\em minimum cycle mean}.  Subtracting 
a constant $c$ from the weight of all edges in $G_d$ subtracts $c$ from the 
mean weight of every cycle and path of length at least one.
For arbitrary $t_1$ and $t_2$, weighting $G_d$ in accordance with 
$(t_1 + c,t_2 - c)$ in place of $(t_1,t_2)$ has the same effect of 
subtracting $c$ from the weights of 
all edges.  Thus, for arbitrary $(t_1,t_2)$, if $c$ is the minimum 
cycle mean of the corresponding weighting of $G_d$, then 
$\lambda = (t_2 - c)/(t_1 + c)$.  Finding $\lambda$ reduces to
finding the minimum cycle mean in the weighting of $G_d$ obtained from
an arbitrarily assigned $(t_1, t_2)$.

In a digraph $G$ with edge weights,
let $F(i,v)$ be the minimum weight of any 
walk of length {\em exactly} $i$ ending at $v$.
In~\cite{karp78}, Karp showed the following:

\begin{theorem}\label{thm:karp}
The minimum cycle mean of a digraph with edge weights is \\
$\min_{v \in V} \max_{0 \leq i < n} [(F(n,v) - F(i,v))/(n-i)]$.
\end{theorem}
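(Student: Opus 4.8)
The plan is to follow the classical argument for Karp's identity in two halves, a lower bound and an upper bound on $M := \min_{v \in V}\max_{0 \le i < n}\big[(F(n,v)-F(i,v))/(n-i)\big]$. First I would reduce to the case where the minimum cycle mean is $0$: subtracting a constant $c$ from every edge weight replaces $F(i,v)$ by $F(i,v)-ci$, hence replaces $(F(n,v)-F(i,v))/(n-i)$ by $(F(n,v)-F(i,v))/(n-i) - c$ and the minimum cycle mean by (minimum cycle mean)$-c$; so it suffices to show that when every cycle of $G$ has nonnegative weight and at least one cycle has weight exactly $0$, one has $M = 0$. Throughout I use the convention $F(0,v) = 0$ for all $v$ (the empty walk), matching the paper's Bellman--Ford setup, and I record the basic fact that, in the absence of negative cycles, the weight $d(v) := \min_{k \ge 0} F(k,v)$ of a least-weight walk ending at $v$ is attained by a simple path: any repeated vertex on a walk ending at $v$ encloses a cycle of nonnegative weight whose removal does not increase the weight, so iterating yields a simple path of length at most $n-1$. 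In particular $d(v)$ is finite, $d(v) = \min_{0 \le i \le n-1} F(i,v)$, and $d(v) \le F(0,v) = 0$.

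For the bound $M \ge 0$, I would fix an arbitrary $v$ (the inequality is trivial if $F(n,v) = \infty$, so assume it is finite) and take a least-weight walk $W$ of length exactly $n$ ending at $v$, so that $w(W) = F(n,v)$. Since $W$ visits $n+1$ vertices, the pigeonhole principle forces a repeated vertex, so $W$ contains a cycle; deleting that cycle leaves a walk $W'$ ending at $v$ of some length $i < n$, and because the deleted cycle has nonnegative weight, $F(i,v) \le w(W') \le F(n,v)$. Hence $F(n,v) - F(i,v) \ge 0$ for this $i$, so $\max_{0 \le i < n}(F(n,v)-F(i,v))/(n-i) \ge 0$; taking the minimum over $v$ gives $M \ge 0$. (The same computation shows $F(n,v) \ge d(v)$ for every $v$.)

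For the bound $M \le 0$ it suffices to produce a single vertex $v$ with $F(n,v) \le F(i,v)$ for all $0 \le i < n$, i.e.\ with $F(n,v) \le d(v)$, which together with $F(n,v) \ge d(v)$ forces $F(n,v) = d(v)$. Here the idea is to use $d$ as a feasible potential: appending an edge $(u,v)$ to a least-weight walk ending at $u$ gives $d(v) \le d(u) + w(u,v)$, so every edge has nonnegative reduced cost $w(u,v) - d(v) + d(u)$. Let $C$ be a cycle of weight $0$. Summing reduced costs around $C$ telescopes the potential terms and leaves $w(C) = 0$, and since each term is nonnegative, each is $0$; thus $w(u,v) = d(v) - d(u)$ for every edge $(u,v)$ of $C$. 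Now pick any vertex $v_0$ of $C$ and a least-weight walk $P$ ending at $v_0$, which can be taken simple and hence has length $\ell \le n-1$. Following $P$ by $j := n - \ell \ge 1$ further steps around $C$ produces a walk of length exactly $n$ ending at some vertex $v$ of $C$, and, because each appended edge of $C$ contributes $d(\cdot) - d(\cdot)$, its weight telescopes to $d(v_0) + (d(v) - d(v_0)) = d(v)$. Therefore $F(n,v) \le d(v)$ for this $v$, completing the proof.

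I expect the upper bound $M \le 0$ to be the delicate part: checking that $d$ is a feasible potential, inferring that a zero-weight cycle has all-zero reduced costs, and then padding the shortest walk into $v_0$ by walking around that cycle so as to reach length exactly $n$ while keeping the weight equal to $d$ of the endpoint. The only place the absence of negative cycles is truly needed is to guarantee $\ell \le n-1$ (so that $j \ge 1$) and the finiteness of $d$; the lower bound and the reduction step are routine.
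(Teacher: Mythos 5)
Your proof is correct and follows essentially the same route as the paper's (i.e.\ Karp's) argument, which the paper reproduces in reindexed form for Theorem~\ref{thm:karp2}: normalize by subtracting the minimum cycle mean so that it becomes zero, obtain the lower bound by deleting a nonnegative cycle from a minimum-weight walk of length $n$, and obtain the upper bound by padding a shortest (simple) path into a zero-weight cycle out to length exactly $n$. The only cosmetic difference is that you certify the padding step via the feasible-potential/zero-reduced-cost observation, whereas the paper (in the appendix lemma and in the proof of Lemma~\ref{lem:meanBound}) argues by contradiction that the padded walk remains a minimum-weight walk to its endpoint.
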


Karp actually shows this when an arbitrary vertex $s$ is selected
and $F(i,v)$ is defined to be the minimum weight of all walks of
length $i$ from $s$ to $v$, but if it is true for walks beginning
at an arbitrary vertex $s$,
then it is true when $s$ is allowed to vary over all vertices of $V$.
Omitting $s$ from consideration in this way in his proof gives a direct proof 
of this variant of his theorem.
He reduces the problem to the special case where $G$ is strongly
connected by working on each strongly-connected component separately,
but the only purpose of this in his proof is to ensure that there is a path 
from $s$ to all other vertices, and this is unnecessary when $s$ is allowed 
to vary over all vertices.  

$F(i,v)$ can be computed by a variant of Bellman-Ford, by 
using the recurrence
$F(i,v) = \min(\{\infty\} \cup \{F(i-1,u) + w(u,v)| (u,v) \in E\})$
in place of $D(i,v) = \min(\{D(i-1,v)\} \cup 
\{D(i-1,u) + w(u,v)| (u,v) \in E\})$.  The only difference from the
algorithm of Section~\ref{sect:BF} is that 
loops of the form $(v,v)$ are not considered to be edges.  Computing 
$F(n,v)$ for all $v \in V$ takes $n$ passes, each of which applies the 
general relaxation operation, for a total of $O(nm)$ time.

An obstacle to an $O(nm/\lambda)$ bound that we did not have in
Section~\ref{sect:BF} is that 
in Theorem~\ref{thm:karp}, computing
$[(F(n,v) - F(i,v))/(n-i)]$ for $0 \leq i < n$
requires $\Theta(n^2)$ computations,
which is not $O(nm/\lambda)$.

We again reindex the dynamic programming
table (Section~\ref{sect:BF}), letting $H(i,v)$ denote the minimum-weight
walk ending at $v$ in $G_d$ that has {\em exactly} $i$ edges of weight $t_2$.
We compute the values in passes, computing $H(i,v)$ for each $v \in V$ 
during pass $i$.  As in Section~\ref{sect:BF}, each pass takes
takes $O(m)$ time; the only change is that in the general relaxation
step, loops are not considered to be edges.
We claim that $O(n/\lambda)$ passes suffice, but a new difficulty
is knowing when to stop, since, unlike $r$ of the Section~\ref{sect:BF},
$\lambda$ is not known in advance.

A walk with $i$ edges of weight $t_2$ and weight $H(i,v)$ has $i$ edges of 
weight $t_2$, so it must have $(it_2 - F(i,v))/t_1$ edges of weight $-t_1$.
Its length, $l(i,v)$, can be computed as $i + it_2 - F(i,v)$ in $O(1)$ time. 

Let a term $H(i,v)$ be {\em term of interest} if $l(i,v) = n$, that is, if
it corresponds to a {\em walk of interest} of length $n$.
We use the following reindexed variant of Karp's theorem, which 
says that it suffices to compute an inner maximum over a smaller set,
and only for terms of interest.  The proof is the one Karp gives, 
reindexed, and omitting reference to a start vertex $s$ by allowing
the start vertex to vary over all vertices.  For completeness, we
give the modified proof in the appendix.

\begin{theorem}\label{thm:karp2}
In $G_d$, the minimum mean weight of a cycle is equal to \\
$\min_{\{(i,v)| l(i,v) = n\}} \max_{0 \leq j < i} (H(i,v) - H(j,v))/(n  - l(j,v))$
\end{theorem}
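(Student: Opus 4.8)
The plan is to derive Theorem~\ref{thm:karp2} as a reindexed restatement of Karp's Theorem~\ref{thm:karp}, applied to $G_d$, using the correspondence between the ``length'' coordinate $i'$ in $F(i',v)$ and the ``number of $t_2$-edges'' coordinate $i$ in $H(i,v)$. First I would establish the bijective dictionary between the two tables: a walk ending at $v$ with exactly $i$ edges of weight $t_2$ has, by the weight accounting $it_2 + (\text{\#}(-t_1)\text{-edges})(-t_1) = H(i,v)$, precisely $(it_2 - H(i,v))/t_1$ edges of weight $-t_1$, hence total length $l(i,v) = i + (it_2 - H(i,v))/t_1$. (I notice the excerpt writes $l(i,v) = i + it_2 - F(i,v)$, which silently takes $t_1 = 1$; I would keep the general $t_1$ in the exposition or note the normalization.) The key observations are: (a) among all walks of a given length $\ell$ ending at $v$, the minimum-weight one has some number $i$ of $t_2$-edges, and that minimum weight is exactly $\min\{H(i,v) : l(i,v) = \ell\}$; conversely $H(i,v)$ is the minimum weight over walks of length $l(i,v)$ ending at $v$ with exactly $i$ edges of weight $t_2$. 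So the set of pairs $(\text{length},\text{min weight at }v)$ realized by the $F$-table and by the $H$-table coincide; in particular $\min_{\{i : l(i,v)=n\}} H(i,v) = F(n,v)$, and for each $j' < n$, $F(j',v) = \min_{\{j : l(j,v) = j'\}} H(j,v)$.

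With this dictionary in hand, I would rewrite Karp's formula. Karp's expression is $\min_{v}\max_{0 \le i' < n}[(F(n,v) - F(i',v))/(n - i')]$. Substituting $F(n,v) = \min_{\{i : l(i,v)=n\}} H(i,v)$ and $F(j',v) = \min_{\{j : l(j,v) = j'\}} H(j,v)$, and reindexing the inner maximum over $i'$ from $0$ to $n-1$ as a maximum over all pairs $(j,v)$ with $0 \le l(j,v) < n$ (grouping the $j$'s by the value $j' = l(j,v)$), the inner $\max$ over $i'$ of a quantity involving $F(i',v) = \min_j H(j,v)$ becomes a $\max$ over those $j$ of the quantity with $H(j,v)$ in place of $F(i',v)$ and $l(j,v)$ in place of $i'$ — because maximizing $-F(i',v)/(n-i')$ over $i'$, with $F(i',v)$ itself a min over $j$, unfolds to maximizing $-H(j,v)/(n - l(j,v))$ over all relevant $j$. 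Similarly the outer $\min_v$ combined with $F(n,v) = \min_i H(i,v)$ becomes $\min_{\{(i,v) : l(i,v)=n\}}$. This yields exactly
\[
\min_{\{(i,v)\,:\,l(i,v)=n\}} \ \max_{0 \le j < i} \ \frac{H(i,v) - H(j,v)}{n - l(j,v)},
\]
provided I can also argue that restricting the inner maximum to $j < i$ (rather than all $j$ with $l(j,v) < n$) does not change the value.

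The main obstacle I expect is precisely that last point: justifying that the inner maximum may be taken only over $j < i$ (the number of $t_2$-edges), rather than over all $j$ with $l(j,v) < l(i,v) = n$. In Karp's original proof this is automatic because the length index $i'$ is monotone — a shortest walk of length $n$ passes through configurations of every smaller length $i' < n$. Here the analogous fact is that a minimum-weight walk of interest (length $n$, ending at $v$, with $i$ edges of weight $t_2$) contains, along its prefixes, walks with every number $j$ of $t_2$-edges for $0 \le j \le i$, and that the prefix with $j$ edges of weight $t_2$ is itself weight-optimal among walks ending at its endpoint with $j$ edges of weight $t_2$ (a standard subpath-optimality / exchange argument, exactly as in Karp). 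So for the specific $(i,v)$ that achieves the outer minimum, the relevant competitor indices are exactly $j \in \{0,\dots,i-1\}$, and enlarging to all $j$ with $l(j,v) < n$ cannot increase the outer minimum below the claimed value, while Karp's theorem already gives that it cannot exceed it. I would lay this out carefully, mirroring Karp's argument line-by-line with the reindexing substituted in, and — as the excerpt promises — defer the fully spelled-out modified proof to the appendix, giving here only the reduction to Theorem~\ref{thm:karp} plus the subpath-optimality lemma that licenses the restricted inner range.
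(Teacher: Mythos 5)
Your plan hinges on the ``dictionary'' identities $F(n,v)=\min_{\{i:\,l(i,v)=n\}}H(i,v)$ and $F(j',v)=\min_{\{j:\,l(j,v)=j'\}}H(j,v)$, and these are false. $H(i,v)$ is the minimum weight over walks with exactly $i$ heavy edges of \emph{any} length, and $l(i,v)$ is the length of that one optimal walk; lengths that are not of the form $l(j,v)$ are simply not represented in the reindexed table. Concretely, take $G$ the directed path $a\to b\to c$ with $t_1=1$, $t_2=2$, so $G_d$ has edges $(b,a),(c,b)$ of weight $-1$ and $(a,c),(c,a)$ of weight $2$. Then $H(0,a)=-2$ with $l(0,a)=2$, $H(1,a)=-2$ with $l(1,a)=5$, $H(2,a)=-2$ with $l(2,a)=8$, so no index $i$ has $l(i,a)=n=3$ even though $F(3,a)=0$ is finite (via $a\to c\to b\to a$); likewise $F(1,a)=-1$ is not $\min_{\{j:\,l(j,a)=1\}}H(j,a)$, since that set is empty. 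Consequently the substitution of $H$ for $F$ in Theorem~\ref{thm:karp} is not a reindexing identity: the inner maximum in the new formula ranges only over the sparse set of lengths $\{l(j,v):j<i\}$ rather than all lengths below $n$, the outer minimum ranges only over vertices that happen to admit an $H$-optimal walk of length exactly $n$, and the numerator uses $H(i,v)$, which can strictly exceed $F(n,v)$. These discrepancies push in opposite directions, so you cannot even salvage a one-sided inequality from the unfolding argument; and the obstacle you single out (restricting the inner range to $j<i$) is not where the real difficulty lies. Your subpath-optimality observation (prefixes of an $H$-optimal walk are $H$-optimal for their own states) is correct but does not repair the mismatch between the two tables.

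The paper does not derive Theorem~\ref{thm:karp2} from Theorem~\ref{thm:karp} at all; it re-runs Karp's proof in the new indexing. First it normalizes, subtracting the minimum cycle mean $c$ from every edge weight so the minimum cycle mean becomes $0$ and both sides of the identity shift by $c$. Then, in the zero-mean case, it proves both inequalities directly: for every term of interest, $H(i,v)\ge\pi(v)=\min_{0\le j<i}H(j,v)$ (absence of negative cycles), so the inner maximum is nonnegative; and equality is attained by exhibiting a term of interest with $H(i,v)=\pi(v)$, constructed by appending enough repetitions of a zero-weight cycle $C$ to a minimum-weight path ending on $C$ and truncating at length $n$. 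That existence construction is the essential new content (it is also what Lemma~\ref{lem:meanBound} supplies for the algorithm), and it is exactly what a substitution argument cannot provide. If you want to keep your write-up, you should abandon the dictionary and instead mirror this two-sided argument with the $H$-indexing throughout.
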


The solution is given as Algorithm~\ref{alg:MCM}.
During the $i^{th}$ pass, the algorithm computes 
$H(i,v)$ for all $v \in V$.
Before proceeding to the next pass, it updates a partial
computation of the expression of Theorem~\ref{thm:karp2}, computing
$\max_{0 \leq j < i} (H(i,v) - H(j,v))/(l(i,v) - l(j,v))$ for
each the terms of interest $H(i,v)$ that has been computed during the pass,
and keeping track of the minimum of these computations so far.
Let a term of interest $H(i,v)$ be {\em critical} if 
the minimum cycle mean is equal to 
$\max_{0 \leq j < i} (H(i,v) - H(j,v))/(l(i,v) - l(j,v))$.
The strategy of the algorithm is to return
the minimum it has found so far once it
detects that a critical term has been evaluated.

Let a {\em critical walk} be a walk of length $n$ giving
rise to a critical term.  

\begin{lemma}\label{lem:meanBound}
In $G_d$, the mean weight of a critical walk is less than or equal to the 
minimum cycle mean.
\end{lemma}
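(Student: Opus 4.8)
The plan is to reduce the claim to an inequality between two reweighted walk-weights and then to extract that inequality from the single $j=0$ term of the inner maximum in the reindexed Karp formula (Theorem~\ref{thm:karp2}). First I would fix notation. Let $W$ be a critical walk, arising from a critical term $H(i,v)$, so $l(i,v)=n$ and $W$ has length $n$ and weight $H(i,v)$; hence its mean weight is $H(i,v)/n$. Let $\mu$ denote the minimum cycle mean of $G_d$, and for each index $k$ put $g(k)=H(k,v)-\mu\, l(k,v)$ --- the weight that the reindexed optimal $k$-walk to $v$ acquires once $\mu$ is subtracted from every edge of $G_d$. Because $l(i,v)=n$, the assertion ``$H(i,v)/n\le\mu$'' is literally the assertion ``$g(i)\le 0$''. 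By the definition of a critical term, $\mu=\max_{0\le j<i}\bigl(H(i,v)-H(j,v)\bigr)/\bigl(n-l(j,v)\bigr)$; specialising this maximum to $j=0$ and clearing the denominator $n-l(0,v)$ (which is positive, as noted below) rearranges to exactly $g(i)\le g(0)$. So it remains only to prove $g(0)\le 0$.

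For the bound $g(0)\le 0$ I would argue as follows. An optimal walk realising $H(0,v)$ has no edge of weight $t_2$, so it uses only edges of weight $-t_1$; since those edges of $G_d$ are acyclic, such a walk is a simple path and $0\le l(0,v)\le n-1$. (In passing this makes $n-l(0,v)>0$, as used above, and it also forces $i\ge 1$, so that $j=0$ is a legitimate index in the maximum: a walk of length $n$ cannot be built from $-t_1$ edges alone.) Every edge of that path has weight $-t_1$, so $H(0,v)=-t_1\, l(0,v)$ and therefore $g(0)=-(t_1+\mu)\, l(0,v)$. Finally, $\mu$ is an average of the two edge-weights $t_2$ and $-t_1$ of $G_d$, so for a positive choice of thresholds (which we may assume) $\mu\ge -t_1$, i.e.\ $t_1+\mu\ge 0$; together with $l(0,v)\ge 0$ this yields $g(0)\le 0$, hence $g(i)\le 0$, which is the lemma.

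I expect the only delicate part to be the bookkeeping forced by the reindexing --- confirming that the $j=0$ term actually occurs in the maximum (that is, $i\ge 1$) and that its denominator $n-l(0,v)$ is strictly positive. Both are consequences of the single structural fact that the weight-$(-t_1)$ edges of $G_d$ are acyclic, so I do not anticipate a real obstacle; once that is in place the argument is pure arithmetic. (The same computation in fact gives $g(0)<0$ unless $l(0,v)=0$ or $\mu=-t_1$, but only $g(i)\le 0$ is needed here.)
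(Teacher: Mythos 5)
Your proof is correct, but it takes a genuinely different route from the paper's. You fix an arbitrary critical term $H(i,v)$ and extract the bound directly from its defining equation: the $j=0$ term of the maximum yields $H(i,v)-\mu n \le H(0,v)-\mu\, l(0,v)$, and the right-hand side equals $-(t_1+\mu)\,l(0,v)\le 0$ because a walk with no $t_2$-edges uses only the acyclic $-t_1$-edges (so $H(0,v)=-t_1 l(0,v)$, $0\le l(0,v)\le n-1$, and in particular $i\ge 1$, so $j=0$ is available and the denominator $n-l(0,v)$ is positive), and because $\mu\ge -t_1$, every cycle of $G_d$ having edge weights in $\{-t_1,t_2\}$ with positive thresholds. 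The paper argues globally, in the style of Karp's proof: it subtracts the minimum cycle mean $c$ from all edge weights so that a minimum-mean cycle $C$ gets weight $0$, takes a minimum-weight path into $C$, pads it around $C$ to a walk $W$ of length exactly $n$, shows $W$ is a minimum-weight walk to its endpoint and hence has nonpositive shifted weight, and concludes that this particular walk of interest has mean weight at most $c$; i.e., it exhibits one suitable walk of interest (the one realizing the minimum in Theorem~\ref{thm:karp2}), whereas you bound every critical walk straight from the definition of criticality, which is what the lemma literally asserts and is all the halting argument for Algorithm~\ref{alg:MCM} needs. What each buys: your argument is shorter, avoids the strongly-connected reduction and the shifted graph $G'_d$, and treats all critical terms uniformly; the paper's construction works for arbitrary edge weights, while yours leans on the (harmless in context, but worth stating, since the paper's appendix proof begins with ``$(t_1,t_2)$ assigned arbitrarily'') assumption $t_1,t_2>0$ to get $\mu\ge -t_1$.
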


The proof is given in the appendix.

\begin{theorem}
	Given a nondegenerate dag $G$, it takes $O(nm/\lambda)$ time to find $\lambda(G)$.
\end{theorem}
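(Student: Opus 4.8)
The plan is to analyze Algorithm~\ref{alg:MCM} and show that it halts after $O(n/\lambda)$ passes, each of which costs $O(m)$ time, giving the claimed $O(nm/\lambda)$ bound; the correctness of the value returned follows from Theorem~\ref{thm:karp2} and Lemma~\ref{lem:meanBound}. First I would recall the setup: we fix an arbitrary weighting of $G_d$ coming from some $(t_1,t_2)$, and by the discussion preceding Theorem~\ref{thm:karp2} it suffices to compute the minimum cycle mean $c$ of this weighting, since $\lambda = (t_2-c)/(t_1+c)$. By reindexing the dynamic programming table, pass $i$ computes $H(i,v)$ for all $v$, and as already argued each pass runs in $O(m)$ time via radix-sorting on the previous pass's values and the dag-variant relaxation on the acyclic $-t_1$ edges. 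So the entire content of the theorem is the bound on the number of passes together with a correct stopping rule.

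Next I would establish the stopping rule. The algorithm maintains, after pass $i$, the running minimum over all terms of interest $H(i',v)$ with $i' \le i$ of the quantity $\max_{0 \le j < i'} (H(i',v)-H(j,v))/(n - l(j,v))$ from Theorem~\ref{thm:karp2}. Lemma~\ref{lem:meanBound} says every critical walk has mean weight at most the minimum cycle mean $c$; combined with the fact that $c$ is by definition the \emph{minimum} over all terms of interest of the inner maximum, once a critical term has been evaluated the running minimum equals $c$ exactly and can never decrease further, so it is safe to return it. The remaining subtlety is \emph{detecting} that a critical term has been evaluated: I would argue that a critical walk has length exactly $n$, hence (being a walk of length $n$ in a digraph with a zero-mean-weight cycle reachable appropriately) decomposes into a path plus a collection of cycles, at least one of which is a minimum-mean cycle; the number of $t_2$-edges on such a walk is therefore $O(n/\lambda)$, because on a minimum-mean-weight cycle the ratio of $-t_1$-edges to $t_2$-edges is exactly $\lambda$, so a walk of total length $n$ can contain at most roughly $n/(\lambda+1)$ edges of weight $t_2$. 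Thus some critical term $H(i,v)$ has $i = O(n/\lambda)$, and since the algorithm evaluates all terms of interest in order of increasing $i$, it will have seen a critical term — and can stop — by pass $O(n/\lambda)$. The concrete stopping condition is: after each pass, check whether the running minimum is consistent with being $c$ in the sense that no later walk of interest (which necessarily has more $t_2$-edges, hence strictly greater length contribution from $-t_1$-edges per $t_2$-edge if its mean weight were smaller) can improve it; equivalently, stop once a pass produces no term of interest whose value undercuts the current minimum and the index $i$ has passed the bound guaranteeing a critical term has appeared.

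The main obstacle I expect is making the stopping rule both correct and cheap: we must be certain that when the algorithm halts, it has genuinely evaluated a critical term and not merely a term of interest whose inner maximum happens to exceed $c$ but which is not itself minimal. The resolution is the combination of (a) Lemma~\ref{lem:meanBound}, which guarantees the evaluated critical term gives a value $\le c$, hence $= c$ by minimality, and (b) the length-$n$ constraint $l(i,v)=n$ on terms of interest, which forces a critical walk to exist among terms with $i = O(n/\lambda)$ by a counting argument on the number of $t_2$-edges a length-$n$ walk can carry. Once both are in place, correctness is immediate from Theorem~\ref{thm:karp2}, and multiplying $O(n/\lambda)$ passes by $O(m)$ per pass yields the $O(nm/\lambda)$ running time. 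I would also note briefly that for a nondegenerate dag $\lambda \ge 1$, so $n/\lambda \le n$ and the bound never exceeds the trivial $O(nm)$, and that once $\lambda$ is known, applying the procedure of Section~\ref{sect:BF} with $(t_1,t_2)$ chosen so that $t_2/t_1 = \lambda$ recovers the certificate — a satisfying assignment together with a zero-weight cycle of $G_d$, i.e.\ a forcing cycle of ratio $\lambda$ — within the same time bound.
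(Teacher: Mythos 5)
Your overall route is the paper's: run Algorithm~\ref{alg:MCM}, get correctness from Theorem~\ref{thm:karp2} and Lemma~\ref{lem:meanBound}, bound the number of passes by $O(n/\lambda)$, and use the reindexed relaxation machinery for the per-pass work. The genuine gap is at exactly the point the paper flags as the new difficulty: the stopping rule. Your ``concrete stopping condition'' asks to stop once ``the index $i$ has passed the bound guaranteeing a critical term has appeared,'' but that bound is about $n/(\lambda+1)$ and $\lambda$ is precisely what is unknown, so as stated the test is circular; the other half of your condition (no term of interest in the current pass undercuts the running minimum) is neither necessary nor sufficient by itself. The fix --- and what line~4 of Algorithm~\ref{alg:MCM} actually does --- is to note that every walk of interest of pass $i$ has weight exactly $it_2-(n-i)t_1$, hence mean weight $(it_2-(n-i)t_1)/n$, a quantity computable without $\lambda$ and strictly increasing in $i$; stop as soon as it exceeds the running minimum $min$. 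Since $min \geq c$ always (it is a minimum over a subset of the terms in Theorem~\ref{thm:karp2}) and Lemma~\ref{lem:meanBound} supplies a critical walk of mean weight at most $c$, at that moment the critical term has already been evaluated, so $min = c$; the same inequality shows a critical term has at most $n(t_1+c)/(t_1+t_2) = n/(\lambda+1)$ edges of weight $t_2$, which yields the $O(n/\lambda)$ pass bound directly. (You do not need, and should not rely on, the claim that a critical walk's cycle decomposition contains a minimum-mean cycle; that is not obvious, whereas the arithmetic from ``mean $\leq c$'' suffices.)

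A second, smaller omission is the cost of the inner maxima. Evaluating $\max_{0 \leq j < i} (H(i,v)-H(j,v))/(n-l(j,v))$ for one term of interest takes $\Theta(i)$ time, and many vertices can become terms of interest in the same pass, so ``each pass costs $O(m)$'' is not justified for this step. The paper amortizes instead: each vertex is a term of interest for at most one value of $i$, so the total cost of all inner-max evaluations is bounded by the number of table entries computed, $O(n) \cdot O(n/\lambda) = O(n^2/\lambda)$, which is within $O(nm/\lambda)$. With the corrected, $\lambda$-free stopping test and this accounting, your argument coincides with the paper's proof.
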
\label{thm:findLambda}

\begin{proof}
The basis of this is Algorithm~\ref{alg:MCM}.
For a term of interest,
$H(i,v)$, the mean weight of the corresponding walk is
$(it_2 - (n-i)t_1)/n$, which is an increasing function of $i$.
Thus, once this exceeds the minimum value, $min$, found so far 
a critical term has been found and is already reflected in the value of $min$.
Thus, Algorithm~\ref{alg:MCM} returns the minimum cycle mean.

The minimum cycle mean is the ratio of edges of weight $-t_1$ to edges
of weight $t_2$ on a cycle of minimum mean.  This must also be true
for a critical walk, by Lemma~\ref{lem:meanBound}.  This ratio for the 
walks of interest in pass
$i$ is $(n-i)/i$, so the algorithm halts before the first pass
$i'$ such that $(n-i')/i' > \lambda$, and $i' = O(n/\lambda)$.  Thus,
Algorithm~\ref{alg:MCM} halts after $O(n/\lambda)$ passes.  

Using the approach of Section~\ref{sect:BF},
the operations in a pass take $O(m)$ time except for evaluating
$\max_{0 \leq j < i} (H(i,v) - H(j,v))/(n - l(j,v))$ for terms $H(i,v)$ 
of interest.  For any vertex $v$, $H(i,v)$ is
a term of interest for at most one value of $i$.  Therefore, the
cost of evaluating $\max_{0 \leq j < i} (H(i,v) - H(j,v))/(n - l(j,v))$ for 
terms of interests is bounded by the total number of dynamic programming 
table entries $H(j,w)$ for $0 \leq j \leq i$ and $w \in V$
computed by the algorithm, which is the number $n$ of them computed in each 
pass times $O(n/\lambda)$ passes.  This is $O(n^2/\lambda)$.
\end{proof}

\begin{algorithm}
\KwData{$G_d$, $(t_1,t_2)$}
\KwResult{The minimum cycle mean $\lambda$ of $G_d$}
$min := \infty$ \;
$H(0,v)$ := $0$ for all $v \in V$ \;
\For {$i$ := 1 to $\infty$} {
    \If {$min < (it_2 - (n-i)t_1)/n$} {return $min$}
    Compute $H(i,v)$ for all $v \in V$ from $H(i-1,v)$ for all $v \in V$ \;
    \For {each term $H(i,v)$ such that $l(i,v) = n$} {
        $k$ := $\max_{0 \leq j < i} (H(i,v) - H(j,v))/(n - l(j,v))$ \;
        \If {$k < min$} {$min = k$}
    }
}
\caption{Find the minimum cycle mean of $G_d$}\label{alg:MCM}
\end{algorithm}

\section{Appendix}

{\bf Theorem~\ref{thm:kextTime}:}
{\em Given a $k$-clique extendable ordering of a graph $G$, a maximum clique
can be found in $O(km^{k/2})$ time.}

\begin{proof}
Let $(v_1, v_2, \ldots, v_n)$ be the $k$-clique extendable ordering.
There are $O(m^{k/2})$ cliques in a connected graph with $m$ edges,
and they can be enumerated in $O(km^{k/2})$ time~\cite{ChibaNishizeki}.
If there are no $k$-cliques, a maximum clique can be found
in $O(km^{k/2})$ time by applying this algorithm to find
all $i$-cliques for $i < k$.

Otherwise, we denote each $k$-clique with a tuple 
$(u_1, u_2, \ldots, u_k)$ where $\{u_1, u_2, \ldots, u_k\}$ are
the elements of the clique and $u_1 < u_2 < u_3, \ldots, u_k$
in the $k$-clique extendable ordering.  We order the cliques lexicographically
by the reverse of its tuple (cliques sharing the last $j < k$ members
are consecutive in the list).  The lexicographic sort takes
$O(km^{k/2})$ time, since there are $O(m^{k/2})$ of them.
This list serves as the dynamic programming table, which has one
entry for each $k$-clique.
In addition we create a {\em block}, identified as $(u_2, u_3, \ldots, u_k)$ 
for each nonempty block of cliques that share $(u_2, u_3, \ldots, u_k)$ 
as their rightmost $k-1$ elements; they are consecutive
in the dynamic programming table.
This block is {\em relevant} to each $k$-clique of the form
$(u_2, u_3, \ldots, u_k,x)$.
We precompute
a pointer from each clique to its relevant block by 
lexicographically sorting the cliques by their first $k-1$ elements
in reverse order.  This order
is the order of their blocks in the table, so traversing this list and the
table concurrently allows assignment of the pointers from cliques to their
relevant blocks $O(km^{k/2})$ time.

The dynamic programming labels each $k$-clique with the size of the maximum 
clique of $G$ ending with its $k$ elements, and each block with the
maximum of the labels of cliques in the block.  This can be done
in lexicographic order:  when the last clique in
a block has been labeled, the block is labeled with the maximum of the
labels of the cliques in the block, and when clique
$(u_1, u_2, \ldots, u_k)$ is reached, its label is one plus the maximum
of the labels in its relevant block $(u_1, u_2, \ldots, u_{k-1})$, 
which is already labeled.  Traversing the
table performing these operations, using the precomputed pointers
to blocks, takes $O(m^{k/2})$ time.

The maximum label of any $k$-clique tells the size of a maximum
clique in the graph.  Let $K=(u_1, u_2, \ldots, u_k)$ be a $k$-clique
with maximum label.
To reconstruct the maximum clique of the graph, note that the last $k-1$ 
elements of this clique are $\{u_2, u_2, \ldots, u_k\}$.  Find the
remaining elements, as follows:  recursively find all but the last
$k-1$ elements of the largest clique ending on a $k$-clique in $K$'s relevant
block, which is empty if $K$ has no relevant block, and add $u_1$ to this 
result. 
\end{proof}

{\bf Theorem~\ref{thm:coloring}}:
{\em For $G$ in the class of dags where $\lfloor \lambda(G) \rfloor + 1 \leq k$,
there is a polynomial
$k$-approximation algorithm for the problem of finding a minimum
coloring of $G$.}

\begin{proof}
Given a nondegenerate dag $G$, find a satisfying assignment of utility values
for $(t_1, t_2)$ such that $t_2/t_1 = \lambda(G)$.

Let $x$ be the lowest
value assigned by the algorithm to any of the vertices, and let
$y$ be the highest.  
Partition the interval $[x,y]$ into {\em buckets} of the form
$[x,y] \cap [x + it_1, x+(i+1)t_1)$ for $i \geq 0$.  For each bucket, 
return the vertices whose $\alpha$
values lie in each bucket as the color classes.

Let $v$ be a vertex such that $\alpha(v) = x$.  In an optimum coloring, 
${\cal C}$,
removal of the color class containing $v$ removes a subset of
the set of vertices in the first $k$ buckets since it is an
independent set.
Removal of the color classes of the returned coloring that correspond
to the first $k$ buckets advances
the minimum $\alpha$ value among the remaining
vertices by $kt_1$. Removal of the of the color class in an optimum coloring
containing $v$ advances it by at most that much.
By induction on the number of vertices, we may assume that
the number of remaining color classes
of the returned coloring is at most $k$ times the number of color classes
in the remainder of ${\cal C}$.
Thus, for every color class in an optimum coloring, there
are at most $k$ in the coloring returned by the algorithm.
\end{proof}

{\bf Theorem~\ref{thm:cCover}:}  
{\em For $G$ in the class of dags where 
$\lfloor \lambda(G) \rfloor +1 = k$, there is a polynomial
$k$-approximation algorithm for the problem of finding a minimum
clique cover of $G$.}

\begin{proof}
Given a nondegenerate dag $G$, find a satisfying assignment of utility values
for $(t_1, t_2)$ such that $t_2/t_1 = \lambda(G)$.

Let $y$ be a value such that $[y, y+t_2]$ contains the $\alpha$ values of 
a maximum number of vertices.  Select $v$ from among the vertices
whose $\alpha$ value lie in $[y, y+t_2]$.  
Select $\{v = v_1, v_2, \ldots, v_j\}$ so that for $i > 1$,
$v_i$ minimizes $\alpha(v_i)$ over all vertices $x$ such that
$\alpha(x) > \alpha(v_{i-1}) + t_2$.  
Select $\{v = w_1, w_2, \ldots, v_{j'}\}$ such that for $i > 1$,
$w_i$ maximizes $\alpha(w_i)$ over all vertices $x$ such that
$\alpha(x) < \alpha(v_{i-1})-t_2$.
Let $K$ be the union of these two sets.
Because the pairwise differences
in $\alpha$ values are greater than $t_2$, $K$ is a clique.
Let this be one of the cliques in the clique cover.  Remove
it from the set of vertices, and recurse on the remaining vertices
to get the remaining cliques of the cover.

To see that this has an approximation ratio of at most $k$, 
let $X$ be the set of vertices whose $\alpha$ values are in $[y, y+t_2]$.  
Each clique of the clique cover returned by the algorithm removes one vertex
from $X$.  In a minimum clique cover, each pair of vertices must
have $\alpha$ values that differ by at least $t_1$.  Thus, no
clique can contain more than $k$ vertices from $X$.
The clique cover returned by the algorithm has at most $k$ times the
number of cliques as a minimum clique cover.
\end{proof}

{\bf Lemma~\ref{lem:meanBound}}:
{\em In $G_d$, the mean weight of a critical walk is less than or equal to the 
minimum cycle mean.}

\begin{proof}
Let $(t_1,t_2)$ be assigned arbitrarily.  Since we may apply the result
separately to each strongly connected component of $G_d$, we may assume that
$G_d$ is strongly connected.
Let $G'_d$ be the result of subtracting $c$ from every edge weight in $G_d$.
The mean weight of $C$, hence its total weight, is 0 in $G'_d$.
Since they all have the same length $n$, the paths of interest in $G_d$ are 
the same as they are in $G'_d$.

Out of all paths ending at a vertex on $C$, let $P$ be one of minimum 
weight in $G'_d$, let $w_1$ be its
weight in $G'_d$, and let $u \in C$
be the last vertex of $P$.  Let $W'$ be the walk of length $n - |P|$
obtained by walking round and round $C$, starting at $u$, let $v$
be the last vertex of $W'$, and let $W$ be the walk of length $n$
obtained by concatenating $P$ and $W'$.  Let $s$ be the first vertex
of $W$.

In $G'_d$, the weight $w$ of $W$ is equal to the minimum weight of a 
walk of any length ending at $v$, which is seen as follows.
Suppose there is a walk $W'$ of weight $w' < w$, ending
at $v$.  Let $w_2$ be the
weight of the portion of $C$ directed from $u$ to $v$, and let $w_3$
be the weight of the portion of $C$ directed from $v$ to $u$.  
Since $C$ has weight 0, $w = w_1 + w_2 > w'$.
Appending to $W'$ the portion of $C$ from $v$ to $u$
gives a walk ending at $u$ of 
weight $w' + w_3 < w_1 + w_2 + w_3$, and, since $C$ has weight 0,
this is just $w_1$.  Removing any cycles from this walk, we get
a path of weight $w_1$, contradicting that $P$
is a path of minimum weight to $u$.

Thus, $W$ is a walk of interest in $G'_d$, hence in $G_d$.
In $G'_d$, since there is a walk of length 0 and weight $0$
ending at $v$, so the weight of $W$ in $G'_d$, hence its mean weight in $G'_d$ 
is at most the minimum cycle mean of 0 in in $G'_d$.  Its mean weight 
in $G_d$ is at most the minimum cycle mean of $G_d$.

Since the edges of weight $-t_1$ are acyclic 
there is at least one edge of weight $t_2$ on $C$.
Since $W$ has length $n$ and $C$ is the only cycle on it,
$W$ makes at least one complete revolution of $C$, and
the weight of $W$ is $H(i,v)$ for some $i > 0$.  Therefore,
$l(i,v) = n$ and $H(i,v)$ is a term of interest, and the mean weight
of its walk of interest is the minimum cycle mean.
\end{proof}

{\bf Theorem~\ref{thm:karp2}:}
{\em In $G_d$, the minimum mean weight of a cycle is equal to \\
$\min_{\{(i,v)| l(i,v) = n\}} \max_{0 \leq j < i} (H(i,v) - H(j,v))/(n  - l(j,v))$}
\\

The proof requires a lemma:
\\

{\bf Lemma}:  If the minimum cycle mean is zero, then \\
$\min_{(i,v)| l(i,v) = n} \max_{0 \leq j < i} (H(i,v) - H(j,v))/(n  - l(j,v)) = 0$.

\begin{proof}
Suppose $l(i,v) = n$.
Because there are no negative cycles, there is a minimum-weight walk ending
at $v$ whose length is less than $n$.  Let its weight be $\pi(v)$.
$H(i,v) \geq \pi(v)$.  Also, $\pi(v) = \min_{0 \leq k < i} H(k,v)$, so
$H(i,v) - \pi(v) = \max_{0 \leq k < i} (H(i,v) - H(k,v)) \geq 0$,
and $\max_{0 \leq k < i} (H(i,v) - F(k,v))/(n-l(k,v)) \geq 0$.

Equality holds if and only if $H(i,v) = \pi(v)$.  We complete the
proof by showing that there exists $v$ such that there exists
$i$ where $l(i,v) = n$ and $H(i,v) = \pi(v)$.
Let $C$ be a cycle of weight zero and let $w$ be a vertex on $C$.
Let $P(w)$ be a path of weight $\pi(w)$ ending at $w$.  Then $P(w)$,
followed by any number of repetitions of $C$, is also a minimum-weight
walk to its endpoint.  After sufficiently many repetitions of $C$,
such an initial part of length $n$ will occur; let its endpoint be $w'$.
Then $l(i,w') = n$ for some $i$, and $H(i,w') = \pi(w')$.  Choosing
$v = w'$, the proof is complete.

{\bf Proof of Theorem~\ref{thm:karp2}}.  
Reducing the weight of each edge weight by a constant $c$ reduces the minimum
cycle mean by $c$, and $H(k,v)$ is reduced by $l(k,v)c$.
If $l(i,v) = n$, $(H(i,v) - H(k,v))/(n-l(k,v))$ is reduced by $c$,
and $\min_{l(i,v) = n} \max_{0 \leq k < i} (H(i,v) - H(k,v))/(n - l(k,v))$ is 
reduced by $c$.  The minimum cycle mean and
this expression are affected equally.
Choosing $c$ to be the minimum cycle mean and then applying the
lemma, we complete the proof.

\end{proof}

\end{document}